\newcommand{\st}{\mathrel{}\middle|\mathrel{}}
\definecolor{mygray}{rgb}{0.9,0.9,0.9}
\newcommand{\grid}[4]{
	
	\draw [->] (0,-0.9*#3) -- (#1*#3 + 1.5*#3, -0.9*#3) node[anchor=#4] {time};
	
	\foreach \time in {0,...,#1}{
		\draw (\time*#3, -0.7*#3) -- (\time*#3, -1.1*#3) node[anchor=north] {\time};
	}
	\foreach \job in {1,...,#2}{
		\node[anchor=east] at (-0.3*#3, 0.5*#3 + #2*#3-\job*#3) {$J_{\job}$};
	}
}
\journalname{Journal of Scheduling}
\begin{document}
	
	\title{Online Algorithms to Schedule a Proportionate Flexible Flow Shop of Batching Machines
	}
	
	
	\author{Christoph Hertrich\textsuperscript{1,2}        \and
		Christian Wei\ss\textsuperscript{3}        \and
		Heiner Ackermann\textsuperscript{3}      \and
		Sandy Heydrich\textsuperscript{3}        \and
		Sven O. Krumke\textsuperscript{1}
	}
	
	\authorrunning{C. Hertrich, C. Wei\ss, H. Ackermann, S. Heydrich, S. O. Krumke} 
	
	\institute{
		\begin{itemize}
			\item[] Christoph Hertrich\\
			hertrich@math.tu-berlin.de
			\item[] Christian Wei\ss\\
			christian.weiss@itwm.fraunhofer.de
			\item[] Heiner Ackermann\\
			heiner.ackermann@itwm.fraunhofer.de
			\item[] Sandy Heydrich\\
			sandy.heydrich@itwm.fraunhofer.de
			\item[] Sven O. Krumke\\
			krumke@mathematik.uni-kl.de
		\end{itemize}
		\begin{itemize}
			\item[\textsuperscript{1}]
			Technische Universit\"at Kaiserslautern\\
			Department of Mathematics\\
			67663 Kaiserslautern, Germany
			\item[\textsuperscript{2}] 
			Technische Universit\"at Berlin\\
			Institute of Mathematics\\
			10623 Berlin, Germany
			\item[\textsuperscript{3}] 
			Fraunhofer Institute for Industrial Mathematics ITWM\\
			Department of Optimization\\
			67663 Kaiserslautern, Germany
		\end{itemize}
	}
	
	\date{Received: date / Accepted: date}
	
	\maketitle
	
	\begin{abstract}		
		This paper is the first to consider online algorithms to schedule a proportionate flexible flow shop of batching machines (PFFB). The scheduling model is motivated by manufacturing processes of individualized medicaments, which are used in modern medicine to treat some serious illnesses.
		We provide two different online algorithms, proving also lower bounds for the offline problem to compute their competitive ratios.
		The first algorithm is an easy-to-implement, general local scheduling heuristic. It is 2-competitive for PFFBs with an arbitrary number of stages and for several natural scheduling objectives. We also show that for total/average flow time, no deterministic algorithm with better competitive  ratio exists.
		For the special case with two stages and the makespan or total completion time objective, we describe an improved algorithm that achieves the best possible competitive ratio $\varphi=\frac{1+\sqrt{5}}{2}$, the golden ratio.
		All our results also hold for proportionate (non-flexible) flow shops of batching machines (PFB) for which this is also the first paper to study online algorithms.	
	
		\keywords{Planning of Pharmaceutical Production \and Proportionate Flow Shop \and Flexible Flow Shop \and Batching Machines \and Online Algorithms \and Competitive Analysis}
	\end{abstract}
	
	\section{Introduction}
	In modern pharmacy, in order to treat various serious illnesses, individualized medicaments are produced to order for a specific patient. These production processes often take place in a complex production line, consisting of many different steps.
	
	As long as each step of the process is still performed manually, for example by a laboratory worker, usually at each step only one patient can be handled at a time.
	However, once the process is scaled to industrial production levels, instead, for some steps, machines, like pipetting robots, are used.
	Indeed, at that point, in order to scale up production, usually several machines are used in parallel at each step.
	What is more, these types of machines can often handle multiple patients simultaneously. 
	If scheduled efficiently, this special feature can drastically increase the throughput of the production line.
	Clearly, in such an environment efficient operative planning  is crucial in order to optimize the performance of the manufacturing process and treat as many patients as possible as quickly as possible.
	
	In a practical setting, the producer of the individualized drug knows nothing about the patient until the medicine is actually ordered. This naturally creates an online scheduling scenario, for which efficient, computable and, if possible, easy-to-understand scheduling rules are needed. Therefore, in this paper, we specifically deal with the online version of the problem, although some of our findings are interesting in terms of offline scheduling as well.
		
	Formally, the manufacturing process studied in this paper is structured in a \emph{flexible flow shop} manner (also called \emph{hybrid flow shop} in the literature).
	A \emph{job} $J_j$, $j=1,2, \ldots, n$, representing the production of a drug for a specific patient, has to be processed in $s$ \emph{stages} $S_1,S_2, \ldots, S_s$ in order of their numbering. At each stage $S_i$ there are available $m_i$ identical, 
	parallel machines $M^{(1)}_i, M^{(2)}_i, \ldots, M^{(m_i)}_i$ to process the jobs. If each stage consists of only one machine, we may drop the machine index and instead identify each stage $S_i$ with its single machine $M_i$.
	
	Each job $J_j$ has a \emph{release date} $r_j \geq 0$, denoting the time at which the job $J_j$ is available for processing at the first stage $S_1$. We assume that jobs are indexed in earliest release date order.
	Furthermore, a job is only available for processing at stage $S_i$, $i=2,3, \ldots, s$, when it has finished processing at the previous stage $S_{i-1}$. 
	
	Processing times are only dependent on the stage, not on the job or the specific machine where the job is processed (recall that machines at each stage are identical). This means that each stage $S_i$, $i=1,2, \ldots, s$, is associated with a fixed \emph{processing time} $p_i$, which is the same for every job when processed at that stage on any machine.
	In the literature, a (flexible) flow shop with such job-independent processing times is sometimes called a \emph{proportionate} flow shop, see, e.g., \citet{Panwalkar:ProportionateReview}.
	
	Recall that, as a special feature from our application, each machine in the flexible flow shop can potentially handle multiple jobs at the same time. 
	These kind of machines are called \emph{(parallel) batching machines} and a set of jobs processed at the same time on some machine is called a \emph{batch} on that machine \citep[Chapter 8]{brucker:scheduling}.
	All jobs in one batch on some machine $M^{(k)}_i$ of stage $S_i$ have to start processing on $M^{(k)}_i$ at the same time.
	In particular, all jobs in one batch at stage $S_i$ have to be available for processing at $S_i$, before the batch can be started.
	The processing time of a batch on $M^{(k)}_i$ remains $p_i$, no matter how many jobs are included in this batch. 
	At each stage $S_i$, $i=1,2, \ldots, s$, machines have a common \emph{maximum batch size} (or \emph{batch capacity}) $b_i$, which is the maximum number of jobs a batch on machines of stage $S_i$ may contain.
	
	Given a feasible schedule $\varsigma$, we denote by $c_{ij}(\varsigma)$ the \emph{completion time} of job $J_j$ at stage $S_i$. For the completion time of job $J_j$ on the last machine we also write $C_j(\varsigma)=c_{sj}(\varsigma)$. If there is no confusion which schedule is considered, we may omit the reference to the schedule and simply write $c_{ij}$ and $C_j$.
	
	As optimization criteria, we study the four objective functions \emph{makespan} $C_{\max} = \max \{C_j \mid j= 1,2, \ldots, n \}$,
	\emph{total completion time} $\sum C_j = \sum_{j=1}^n C_j$, \emph{maximum flow time} $F_{\max} = \max \{F_j \mid j = 1, 2, \ldots, n \}$ and \emph{total flow time} $\sum F_j = \sum_{j=1}^n F_j$, where $F_j = C_j - r_j$. Note that the total flow time (or average flow time, if divided by the number of jobs) measures the average time a patient has to wait for his or her medicament, after the production is ordered. As short waiting times are essential in the treatment of life threatening illnesses, this objective is particularly relevant in practice. For more considerations about meaningful performance measures in applications, we refer to \citet{ackermannGOR}.
	
	Using the standard three-field notation for scheduling problems \citep{GrahamEtAl:ThreeFieldNot, PinedoScheduling},
	our problem is denoted as
	$$FFs \mid r_j, p_{ij} = p_i, p\text{-batch}, b_i \mid f,$$
	where $f$ is one of the four objective functions from above. We refer to the described scheduling model as \emph{proportionate flexible flow shop of batching machines} and abbreviate it by \emph{PFFB}. If we consider the special case where each stage consists of only one machine, we call this the usual \emph{proportionate flow shop of batching machines} and abbreviate it by \emph{PFB}.
	
	In this paper we deal with the \emph{online} problem to schedule a PFFB where each job is unknown until its release date. In particular, this means that the total number $n$ of jobs remains unknown until the end of the scheduling process.
	
	Throughout this paper, we write $\varphi=\frac{1+\sqrt{5}}{2}\approx1.618$ for the golden ratio.
	
	Next, we provide an (offline) example in order to illustrate the problem setting.
	
	\begin{example}\label{FirstExample}
		Consider an instance of PFFB with $s=2$ stages, $m_1=1$ machine at stage $S_1$ and $m_2=2$ machines at $S_2$, maximum batch sizes $b_1=3$ and $b_2=2$, processing times $p_1=3$ and $p_2=4$, as well as, $n=5$ jobs with release dates $r_1=r_2=0$, $r_3=1$, and $r_4=r_5=3$. Fig.~\ref{Fig:FirstExampleA} illustrates a feasible schedule for the instance as job-oriented Gantt chart.
		
		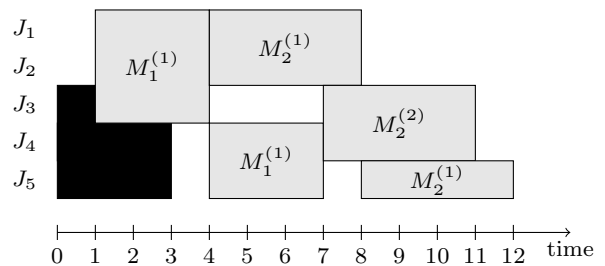
\begin{figure}[t]
			\centering
			\begin{tikzpicture}
			\newcommand{\step}{0.5}
			\grid{12}{5}{\step}{north}
			\draw [fill=black] (0, \step) rectangle (\step, 3*\step);
			\draw [fill=black] (0, 0) rectangle (3*\step, 2*\step);
			\draw [fill=mygray] (\step, 2*\step) rectangle (4*\step, 5*\step)  node [pos=0.5]{$M_1^{(1)}$};
			\draw [fill=mygray] (4*\step, 0) rectangle (7*\step, 2*\step) node [pos=0.5]{$M_1^{(1)}$};
			\draw [fill=mygray] (4*\step, 3*\step) rectangle (8*\step, 5*\step) node [pos=0.5]{$M_2^{(1)}$};
			\draw [fill=mygray] (7*\step, 1*\step) rectangle (11*\step, 3*\step) node [pos=0.5]{$M_2^{(2)}$};
			\draw [fill=mygray] (8*\step, 0) rectangle (12*\step, 1*\step) node [pos=0.5]{$M_2^{(1)}$};
			\end{tikzpicture}
			\caption{A feasible example schedule.}
			\label{Fig:FirstExampleA}
		\end{figure}
		
		Each rectangle labeled by a machine represents a batch of jobs processed together on this machine. The black area indicates that the respective jobs have not been released at this time yet. Note that in this example none of the batches can be started earlier, since either a job of the batch has just arrived when the batch is started, or the machine is occupied before. Still, the schedule does not minimize the makespan, since the schedule shown in Fig.~\ref{Fig:FirstExampleB} is feasible as well and has a makespan of 11 instead of 12.
		
		\begin{figure}[t]
			\centering
			\begin{tikzpicture}
			\newcommand{\step}{0.5}
			\grid{12}{5}{\step}{north}
			\draw [fill=black] (0, \step) rectangle (\step, 3*\step);
			\draw [fill=black] (0, 0) rectangle (3*\step, 2*\step);
			\draw [fill=mygray] (0, 3*\step) rectangle (3*\step, 5*\step)  node [pos=0.5]{$M_1^{(1)}$};
			\draw [fill=mygray] (3*\step, 0) rectangle (6*\step, 3*\step) node [pos=0.5]{$M_1^{(1)}$};
			\draw [fill=mygray] (3*\step, 3*\step) rectangle (7*\step, 5*\step) node [pos=0.5]{$M_2^{(1)}$};
			\draw [fill=mygray] (6*\step, 1*\step) rectangle (10*\step, 3*\step) node [pos=0.5]{$M_2^{(2)}$};
			\draw [fill=mygray] (7*\step, 0) rectangle (11*\step, 1*\step) node [pos=0.5]{$M_2^{(1)}$};
			\end{tikzpicture}
			\caption{An example schedule minimizing the makespan.}
			\label{Fig:FirstExampleB}
		\end{figure}
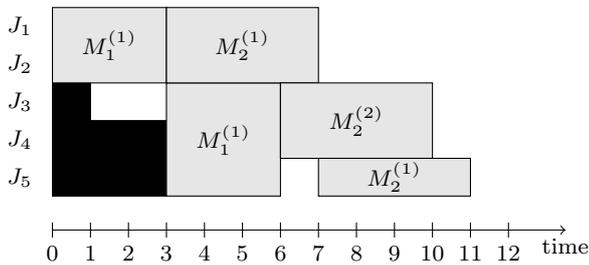
		
		The improvement in the makespan was achieved by reducing the size of the first batch on $M_1^{(1)}$ from three to two, which allows to start it one time step earlier. Observe that no job can arrive at $S_2$ before time step 3. Moreover, one of the two machines at $S_2$ has to process at least two batches in order to complete all five jobs. This takes at least 8 time units. Thus, 11 is the optimal makespan and no further improvement is possible.
	\end{example}
	
	\subsection{Our Results}
	
	This is the first study of P(F)FBs from an online perspective. We concentrate our research on the four practically highly relevant objective functions makespan, total completion time, maximum flow time and total flow time. An overview of our results can be found in Table~\ref{Tab:CompRatioTable}.
	
	\begin{table*} \centering%
		\begin{tabular}{cccccc}
			\toprule
			\shortstack{objective\\function} & \shortstack{number\\of stages} & \shortstack{lower\\bound} & \shortstack{upper\\bound} & \shortstack{reference\\lower bound} & \shortstack{reference\\upper bound}\\
			\midrule[\heavyrulewidth]
			\multirow{3}{*}{$C_{\max}$} & $1$ & $\varphi$ & $\varphi$ & \citet{zhang2003optimal} & \citet{zhang2003optimal} \\
			& $2$ & $\varphi$ & $\varphi$ & \citet{zhang2003optimal} & \textbf{Theorem \ref{Thm:TwoStages}}\\
			& $\geq3$ & $\varphi$ & $2$ & \citet{zhang2003optimal} & \textbf{Corollary \ref{Cor:NeverWait}}\\
			\midrule
			\multirow{2}{*}{$\sum C_j$} & $1,2$ & $\varphi$ & $\varphi$ & \textbf{Theorem \ref{Thm:sumCjLB}} & \textbf{Theorem \ref{Thm:TwoStages}}\\
			& $\geq3$ & $\varphi$ & $2$ & \textbf{Theorem \ref{Thm:sumCjLB}} & \textbf{Corollary \ref{Cor:NeverWait}}\\
			\midrule
			\multirow{2}{*}{$F_{\max}$} & $1$ & $\varphi$ & $\varphi$ & \citet{Jiao2014maxflowtime} & \citet{Jiao2014maxflowtime}\\
			 & $\geq2$ & $\varphi$ & $2$ & \citet{Jiao2014maxflowtime} & \textbf{Corollary \ref{Cor:NeverWait}}\\
			\midrule
			$\sum F_j$ & $\geq1$ & $2$ & $2$ & \textbf{Theorem \ref{Th:LBFSum}} & \textbf{Corollary \ref{Cor:NeverWait}}\\
			\bottomrule
		\end{tabular}%
		\caption{Known bounds for the competitive ratio of deterministic online algorithms to schedule a PFFB. Note that lower bounds for a single stage ($s=1$) carry over to arbitrary many stages by introducing stages with negligible processing times.}\label{Tab:CompRatioTable}%
	\end{table*}
	
	Concerning lower bounds on the competitive ratio, we observe that previously known bounds of $\varphi$ for one-stage PFFBs to minimize makespan or maximum flow time by \citet{zhang2003optimal} and \citet{Jiao2014maxflowtime}, respectively, carry over to arbitrarily many stages by introducing stages with negligible processing times. In addition, we show that $\varphi$ is also a lower bound with respect to total completion time, while for total flow time even a lower bound of $2$ can be achieved.
	
	Concerning upper bounds on the competitive ratio, we provide two algorithms. We first introduce the Never-Wait strategy, where machines are only idle if not enough patients are available for processing (see Section~\ref{Sec:NeverWait} for details). This strategy is easy to implement and applicable for all four considered objective functions and very general machine environments, even with large number of stages and/or large numbers of machines per stage. We prove that the Never-Wait strategy  achieves a competitive ratio of $2$ for any such instance and all four considered objectives. In particular, this is best-possible with respect to total flow time.
	We also show that the ``opposite'' strategy, where we always wait until a full batch can be started, is not an $\alpha$-approximation algorithm (and, in particular, not $\alpha$-competitive) for any $\alpha > 1$.
	For the specific scenario of $s = 2$ stages, we also introduce the $t$-Switch strategy, where the first stage is scheduled in such a way, that as many jobs as possible are available at the second stage at some time $t$; at the second stage, waiting is allowed until time $t$ and afterwards the stage is scheduled as in the Never-Wait strategy (see Section~\ref{Sec:PhiForTwo} for details). Choosing $t$ correctly yields an improved competitive ratio with respect to makespan and total completion time from $2$ to $\varphi$, which is also best possible for these problems.

	\subsection{Overview of this Paper}

	The remainder of this paper is structured as follows. In Section \ref{Sec:Lit}, we give an overview of related literature. In Section \ref{Sec:LowerBoundCmp} we provide lower bounds for the competitive ratio in PFFBs for the four objective functions we consider. Sections \ref{Sec:Permu} and \ref{Sec:LowerBound} are dedicated to establishing the structural results needed in order to prove competitiveness of the online algorithms we provide in the subsequent sections: in Section \ref{Sec:Permu} we prove that permutation schedules with jobs ordered by release dates are optimal for all of our four objective functions and in Section \ref{Sec:LowerBound} we show lower bounds for completion times of jobs in such permutation schedules. The latter are needed to prove competitiveness of our algorithms.
	Sections \ref{Sec:NeverWait} and \ref{Sec:PhiForTwo} deal with the general Never-Wait and the more specialized t-Switch algorithm, respectively.
	Finally, in Section \ref{Sec:Conc} we present conclusions and some further thoughts.

	\section{Literature}\label{Sec:Lit}

	To the best of our knowledge, neither the online PFFB problem, nor its special case, the online PFB problem, have been studied before.  A practical study specific to the application introduced in the beginning can be found in \cite{ackermannGOR}. Some of the concepts investigated in this paper in general have already been introduced in an application specific sense in that paper.
	
	Even for the offline PFFB problem, little is known. Previous work has been focused on (usually non-pro\-por\-tionate) NP-hard generalizations motivated by applications in various manufacturing industries. Common techniques include mixed-integer programming models or application-specific heuristics. See, e.g., \citet{aminnaseri2009hybridbatching, Luo2011hybridbatch, Li2015:HybridFlowShopBatchingGenetic, Tan2018:FF2IncompatibleWeightedTardiness}.
	
	In the special case of offline PFBs, with only one machine per stage, proportionate versions have been studied explicitly. \citet{SungEtAl:ProblemReduction} propose heuristic approaches to minimize the makespan and the total completion time in a PFB. However, they do not establish any complexity result.
	For the special case of two stages, \cite{Ahmadi:Batching}, as well as, \cite{SungYoon:DynamicArrivalsTwoBPM} present polynomial time algorithms. In a previous paper \citep{PFBPaper1}, we present a dynamic program that can be used to minimize several traditional objective functions (including the four objectives studied in this paper) in polynomial time for any fixed number $s$ of stages. For the case of $s$ being part of the input, the complexity status of offline PFBs is open. Significant hardness results have, to the best of our knowledge, not been achieved at all. See also \citep{PFBPaper1}, for an in-depth literature review for offline PFBs.
	
	Although the PFFB problem itself has not been investigated from an online or offline perspective before, there are several helpful results for related online problems in the literature. For our purposes, the most interesting family of related problems is online scheduling of single and parallel batching machines. With job-independent processing times, these problems can be viewed as the one-stage versions of P(F)FBs. We refer to \citet{Tian2014:OnlineSurvey} for a survey.
	
	Concerning online makespan minimization for a single batching machine with identical processing times, \citet{Zhang:OnlineBound} and \citet{Deng2003:OnlineBound} show that no deterministic online algorithm can achieve a competitive ratio better than the golden ratio $\varphi$. \citet{FangEtAl:OnlineGroupedProcTimes} provide a deterministic online algorithm matching this bound, even in a slightly more general setting, where processing times are not assumed to be identical, but only \emph{grouped}, that is, differing by a factor of at most $\varphi$ from each other. Moreover, \citet{zhang2003optimal} show that $\varphi$ is also the precise competitive ratio for makespan minimization on parallel batching machines with identical processing times, that is, a one-stage PFFB. \citet{li2018online} extend this result to minimizing the maximum weighted completion time.
	
	Concerning the total completion time objective, \citet{cao2011sumwjCj} present a 2-competitive online algorithm for parallel batching machines with identical processing times, which even works in the presence of precedence constraints. For the generalization where jobs are allowed to have unequal processing times and the total weighted completion time objective, $(4+\varepsilon)$-competitive algorithms are known \citep{chen2004line,ma2014sumwjCj}.
	
	Research on scheduling parallel batching machines to minimize maximum flow time started with the case where batches may have unbounded size \citep{li2011Fmax}. For the bounded batch model, as we consider it in this paper, \citet{Jiao2014maxflowtime} provide a $\varphi$-competitive deterministic online algorithm to minimize maximum flow time on parallel batching machines with identical processing times, i.e., a one-stage PFFB, and prove that this is best possible. Hence, the competitive ratio for makespan minimization by \citet{zhang2003optimal} carries over to maximum flow time. Recently, it has been shown that the competitive ratio of $\varphi$ for maximum flow time remains also valid for maximum weighted flow time \citep{chai2019online} or maximum flow time with delivery times \citep{lin2019delivery}.
	
	Although total flow time seems to be a very reasonable objective function from a practical perspective, we are not aware of any previous research about competitive algorithms for bounded p-batching problems on single or parallel machines to minimize total flow time.

	\section{Lower Bounds for the Competitive Ratio \label{Sec:LowerBoundCmp}}	
	In this section we provide lower bounds on the competitiveness of deterministic online algorithms for PFFB problems with our considered objective functions $C_{\max}$, $\sum C_j$, $F_{\max}$, and $\sum F_j$.
	
	Before we start, note that lower bounds on PFFBs with only one stage naturally extend to PFFBs with arbitrarily many stages by choosing negligible processing times for all stages but the first one. Therefore, in the following, we focus on such instances with $s=1$.
	
	For the makespan objective, \citet{zhang2003optimal} prove that the golden ratio $\varphi$ is a lower bound on parallel batching machines with identical processing times, i.e., a one-stage PFFB.	
	Furthermore, \cite{Jiao2014maxflowtime} prove the same bound of $\varphi$ for the maximum flow time objective.
	
	\begin{theorem}[\citealt{zhang2003optimal,Jiao2014maxflowtime}]\label{Thm:LBs}
		There are no deterministic online algorithms to minimize the makespan or maximum flow time in a PFFB with a competitive ratio less than $\varphi$. This result holds even if no stage has maximum batch size larger than $2$.
	\end{theorem}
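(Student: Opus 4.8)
The plan is to establish the lower bound through an adaptive adversary argument, and by the reduction already noted in the excerpt it suffices to consider instances with a single stage $s=1$. I would in fact use a single batching machine ($m_1=1$) with batch capacity $b_1=2$ and, after rescaling, processing time $p_1=1$; every batch appearing in the construction then has size at most two, which is exactly what the restriction ``$b_i\le2$'' in the statement asks for. The same instances are PFBs, so the bound also holds in that setting.

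For the makespan I would first release a single job $J_1$ at time $0$ and watch the online algorithm. Let $t_1$ be the time at which it starts the batch containing $J_1$ (if it never does, the competitive ratio is unbounded). If $t_1\ge\varphi-1$, the adversary releases nothing further: the optimum schedules $J_1$ immediately with $C_{\max}=1$, whereas the algorithm has $C_{\max}=t_1+1\ge\varphi$. If $t_1<\varphi-1$, the adversary releases a second job $J_2$ at time $t_1$, just after the algorithm has committed, so that $J_2$ is not contained in the (necessarily singleton) batch of $J_1$; then the optimum runs the batch $\{J_1,J_2\}$ on $[t_1,t_1+1]$ for $C_{\max}=t_1+1$, while the algorithm's machine stays blocked until $t_1+1$ and cannot complete $J_2$ before $t_1+2$. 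The resulting ratio $\frac{t_1+2}{t_1+1}$ is decreasing in $t_1$ and therefore exceeds $\frac{\varphi+1}{\varphi}=\varphi$ on the whole range $t_1<\varphi-1$, using $\varphi^2=\varphi+1$. In either branch no deterministic algorithm beats $\varphi$.

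For the maximum flow time the first branch carries over verbatim: if the algorithm delays $J_1$ to $t_1\ge\varphi-1$, then it has $F_{\max}\ge t_1+1\ge\varphi$ against an optimum value of $1$. The second branch is the delicate part, and is where I expect the real work to be: with only $J_1$ and $J_2$ the algorithm suffers $F_2\ge 2$ against an optimum flow time of $\min\{t_1+1,\,2-t_1\}$, which yields a ratio of $\varphi$ only in the sub-case $t_1\le2\varphi-3$ and a strictly smaller constant in the remaining window $t_1\in(2\varphi-3,\varphi-1)$. To handle that window one must continue adaptively --- releasing a further job each time the algorithm is forced to start an underfull batch, and stopping the construction as soon as some already-released job has been made to wait a factor $\varphi$ longer than it does in an optimal schedule --- so that in every case the golden-ratio identity $\varphi^2=\varphi+1$ is triggered. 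This is precisely the construction of \citet{Jiao2014maxflowtime}, which I would adapt here, and the crux --- the reason a two-job gadget is not enough --- is choosing the release times and the stopping rule so that the bound comes out at the tight value $\varphi$ rather than at some weaker constant.
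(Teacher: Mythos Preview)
The paper does not actually prove this theorem: it is stated as a known result with attributions to \citet{zhang2003optimal} and \citet{Jiao2014maxflowtime}, and no proof is given. So there is no ``paper's own proof'' to compare against; what you have written is strictly more than what the paper does.

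Your makespan argument is correct and is exactly the classical two-job adversary (single machine, $b_1=2$, $p_1=1$) that underlies the cited lower bound. Choosing $m_1=1$ is legitimate here because a lower bound only requires exhibiting \emph{some} PFFB instance family on which every deterministic algorithm fails, and the paper itself notes that one-stage bounds propagate to more stages via negligible processing times. One small point of hygiene: when you say $J_2$ is released ``at time $t_1$, just after the algorithm has committed'', make explicit that the commitment is irrevocable (or release at $t_1+\varepsilon$ and let $\varepsilon\to0$), so that there is no ambiguity about whether $J_2$ could still join the first batch.

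For $F_{\max}$ you are right that the two-job gadget only yields $\varphi$ on part of the range $t_1<\varphi-1$, and that an iterated, adaptive construction is needed to cover the remaining window. What you have written there is an outline and a pointer to \citet{Jiao2014maxflowtime} rather than a proof; since the paper itself merely cites that reference without reproducing the argument, this is in line with the paper's treatment, but you should be aware that as it stands your $F_{\max}$ case is a plan, not a finished argument.
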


	By using constructions similar to \citet{zhang2003optimal} and \citet{Jiao2014maxflowtime}, we also obtain lower bounds on the competitive ratio for the total completion and flow time objectives. We first show that the golden ratio $\varphi$ is also a lower bound for the competitiveness for the total completion time objective.
		
	\begin{theorem}\label{Thm:sumCjLB}
		There are no deterministic online algorithms to minimize the total completion time in a PFFB with a competitive ratio less than $\varphi$.
	\end{theorem}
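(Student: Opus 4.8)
The plan is to adapt the adversarial construction behind Theorem~\ref{Thm:LBs} to the total completion time objective. By the reduction already described it suffices to work with a one-stage instance; I would take a single machine with processing time $p_1 = 1$ and batch capacity $b_1$ to be chosen large. The goal is to show that for every $\delta > 0$ the adversary can force any deterministic online algorithm to a total completion time of at least $(\varphi - \delta)\cdot\mathrm{OPT}$; since $\mathrm{OPT}$ will tend to infinity in the construction, this rules out any competitive ratio below $\varphi$, also in the additive-constant sense.

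The adversary fixes the threshold $T = 1/\varphi = \varphi - 1$, releases one job $J_1$ at time $0$, and watches for the time $t$ at which the algorithm starts the batch containing $J_1$ (for a deterministic algorithm $t$ is a fixed number determined by the prefix $\{J_1\}$, and at time $t$ that batch is necessarily the singleton $\{J_1\}$). If $t > T$, or $J_1$ is never started, the adversary releases nothing further: then $n=1$, the optimum is $1$, and the algorithm pays $t + 1 > T + 1 = \varphi$. If $t \le T$, the adversary releases a burst of $N$ further jobs at time $t + \varepsilon$, with $b_1 \ge N+1$ and $\varepsilon > 0$ small.

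In this second case the burst jobs just miss the batch committed at time $t$, the machine stays busy until $t+1$, and hence every one of the $N$ new jobs completes no earlier than $t + 2$; so the online cost is at least $(t+1) + N(t+2)$. The optimum, on the other hand, is at most $(N+1)(t+1+\varepsilon)$, since offline one may simply wait and run all $N+1$ jobs in a single batch starting at $t + \varepsilon$. Letting $N \to \infty$ and then $\varepsilon \to 0$, the ratio tends to $(t+2)/(t+1)$, which is decreasing in $t$ and therefore at least $(T+2)/(T+1)$, and this equals $\varphi$ by the identity $\varphi^2 = \varphi + 1$. Combining the two cases finishes the proof.

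The one place that needs care is the choice of $T$: it is picked precisely so that ``patient'' behaviour ($t$ large) and ``eager'' behaviour ($t$ small) incur the same penalty, since the adversary only guarantees $\min\{T+1,\ (T+2)/(T+1)\}$ and any other threshold weakens this. The reason a burst is used rather than a single follow-up job — which would already suffice for makespan in Theorem~\ref{Thm:LBs} — is that with only one extra job the ratio tops out near $\varphi^2/2 < \varphi$; the burst makes $J_1$'s own contribution negligible on both sides, so the comparison becomes essentially per-job and reproduces the makespan-type value $\varphi$. A minor technical point is the $\varepsilon$-delay of the burst, which prevents a deterministic algorithm from ``seeing'' the arriving jobs at time $t$ and absorbing them into the batch; one takes $\varepsilon \to 0$ at the end.
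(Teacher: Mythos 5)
Your proposal is correct and matches the paper's own proof essentially verbatim: the same one-stage, single-machine instance with $p_1=1$, the same threshold $\varphi-1$ for the start time of $J_1$'s batch, the same case split between a patient algorithm (no further jobs, ratio at least $\varphi$) and an eager one (a burst of jobs released at $t+\varepsilon$, ratio tending to $(t+2)/(t+1)\geq\varphi$ as the batch size grows and $\varepsilon\to 0$). The only cosmetic difference is parametrizing the burst by $N$ with $b_1\geq N+1$ rather than by $b_1-1$ directly.
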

	\begin{proof}
		Consider a PFFB instance with $s=1$, $m_1=1$, $p_1=1$ and a fixed batch size $b_1$. Suppose the first job $J_1$ is released at time $r_1=0$. We distinguish two possible behaviors of a deterministic online algorithm.
		
		\textbf{Case 1:} The algorithm starts a batch consisting of only $J_1$ at a point in time $t\leq\varphi-1$. In this case, suppose $b_1-1$ further jobs are released at time $t+\varepsilon$ for some small $\varepsilon>0$. Since these jobs can only be started after $J_1$ is finished, the schedule produced by the algorithm has a total completion time of at least $(t+1)+(b_1-1)(t+2)=b_1t+2b_1-1$. An optimal schedule would instead process all jobs in a single batch, resulting in a total completion time of $b_1(t+\varepsilon+1)$. If $\varepsilon$ tends to zero, this results in a competitive ratio of at least $\frac{b_1t+2b_1-1}{b_1(t+1)}\xrightarrow{b_1\rightarrow\infty}\frac{t+2}{t+1}\geq\frac{\varphi+1}{\varphi}=\varphi$.
		
		\textbf{Case 2:} The algorithm does not start a batch before time $\varphi-1$. In this case, suppose no further job is released and $J_1$ is the only job of the whole instance. The schedule produced by the algorithm has a total completion time of at least $\varphi$. An optimal schedule would start processing $J_1$ immediately at time $r_1=0$, resulting in a total completion time of $1$. Hence, also in this case, the competitive ratio is at least $\varphi$.		
		\qed
	\end{proof}
	
	Finally, for the total flow time, we can obtain a lower bound of $2$. The construction is very similar to the last proof, but instead of using $t= \varphi-1$
	as the border between the two cases, this time we use $t=1$. 
	
	\begin{theorem} \label{Th:LBFSum}
		There are no deterministic online algorithms to minimize the total flow time in a PFFB with competitive ratio less than $2$.
	\end{theorem}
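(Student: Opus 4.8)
The plan is to recycle the two-case adversary argument from the proof of Theorem~\ref{Thm:sumCjLB} almost verbatim, only sliding the threshold that separates the two cases from $\varphi-1$ down to $1$. As there, I would work with a one-stage instance ($s=1$, $m_1=1$, $p_1=1$, arbitrary batch capacity $b_1$), release a single job $J_1$ at time $r_1=0$, and branch on when a given deterministic online algorithm first starts a batch; before the adversary reacts only $J_1$ is available, so every batch started up to time $1$ must equal $\{J_1\}$.

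In the first case the algorithm starts $\{J_1\}$ at some time $t\le 1$, and the adversary then releases $b_1-1$ further jobs at time $t+\varepsilon$ for small $\varepsilon>0$. Since the single machine is occupied on $[t,t+1]$ and the new jobs arrive strictly after $t$, none of them can begin before $t+1$, so the algorithm's total flow time is at least $(t+1)+(b_1-1)(2-\varepsilon)$, whereas the optimum can place all $b_1$ jobs into one batch released at $t+\varepsilon$, for a total flow time of $(t+1+\varepsilon)+(b_1-1)$. Letting $\varepsilon\to 0$ and then $b_1\to\infty$ sends the ratio to $\frac{(t+1)+2(b_1-1)}{(t+1)+(b_1-1)}\to 2$. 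In the second case the algorithm starts no batch before time $1$, and then the adversary releases nothing more, so $J_1$ is the whole instance: the algorithm completes it no earlier than time $2$, hence with flow time at least $2$, against the optimal flow time $1$ obtained by starting at $r_1=0$; the ratio is again at least $2$. Since every deterministic algorithm falls into (at least) one of these cases, none can have competitive ratio below $2$.

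All the arithmetic here is elementary, so I do not expect a genuine obstacle; the only thing worth pinning down is the choice of threshold. The value $t=1$ is exactly where the ``waited too long'' penalty in Case~2 first reaches the factor $2$ (a smaller threshold would leave Case~2 short of the bound), and one should verify that it simultaneously leaves Case~1 intact and that, just as in the $\sum C_j$ bound, the Case~1 ratio attains $2$ only in the limit $b_1\to\infty$, so both limits $\varepsilon\to 0$ and $b_1\to\infty$ are genuinely needed.
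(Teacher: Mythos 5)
Your proposal is correct and follows essentially the same route as the paper's own proof: the identical one-stage instance ($s=1$, $m_1=1$, $p_1=1$, batch capacity $b_1$), the same two-case adversary with threshold $t=1$, and the same limiting arguments $\varepsilon\to 0$, $b_1\to\infty$. No gaps.
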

	\begin{proof}
		As in the proof of Theorem~\ref{Thm:sumCjLB}, consider a PFFB instance with $s=1$, $m_1=1$, $p_1=1$ and a fixed batch size $b_1$. Again, we distinguish two cases, dependent on the start time of the first batch.
		
		\textbf{Case 1:} The algorithm starts a batch consisting of only $J_1$ at a point in time $t\leq1$. In this case, suppose $b_1-1$ further jobs are released at time $t+\varepsilon$ for some small $\varepsilon>0$. Since these jobs can only be started after $J_1$ is finished, the schedule produced by the algorithm has a total flow time of at least $(t+1)+(b_1-1)(t+2-t-\varepsilon)=t+1+(b_1-1)(2-\varepsilon)$. An optimal schedule would instead process all jobs in a single batch, resulting in a total flow time of $(t+\varepsilon+1)+b_1-1$. If $\varepsilon$ tends to zero, this results in a competitive ratio of at least $\frac{t+1+2(b_1-1)}{t+b_1}\xrightarrow{b_1\rightarrow\infty}2$.
		
		\textbf{Case 2:} The algorithm does not start a batch before time $1$. In this case, suppose no further job is released and $J_1$ is the only job of the whole instance. The schedule produced by the algorithm has a total flow time of at least $2$, while an optimal schedule could have started processing $J_1$ immediately at time $r_1=0$, resulting in a total flow time of $1$. Hence, also in this case, the competitive ratio is at least $2$.		
		\qed
	\end{proof}	

	\section{Optimality of Permutation Schedules \label{Sec:Permu}}
	
	In a \emph{permutation schedule} the order of the jobs is the same on all stages of the flexible flow shop. This means there exists a permutation $\pi$ of the job indices such that $c_{i\pi(1)}\leq c_{i\pi(2)} \leq \ldots \leq c_{i\pi(n)}$, for all $i=1,\ldots,s$. Since the processing times only depend on the stage and no preemption is possible, clearly the same then also holds for starting times instead of completion times.
	Note that this definition is not dependent on the specific machine where a job is scheduled. 
	If there exists an optimal schedule which is a permutation schedule with a certain ordering $\pi$ of the jobs, we say that \emph{permutation schedules are optimal}. A job ordering $\pi$ which gives rise to an optimal permutation schedule is then called an \emph{optimal job ordering}.
	Finally, an ordering $\pi$ of the jobs is called an \emph{earliest release date ordering}, if $r_{\pi(1)} \leq r_{\pi(2)} \leq \ldots \leq r_{\pi(n)}$.
	
	Using the techniques in the proofs of Lemma 2 and Theorem 3 from \citep{PFBPaper1}, we obtain the following theorem. For the interested reader, details of the proof are available in the appendix.
	
	\begin{theorem}\label{Thm:PermI}
		For a PFFB with objective function $C_{\max}$, $\sum C_j$, $F_{\max}$, or $\sum F_j$, permutation schedules are optimal. Moreover, any earliest release date order is an optimal ordering of the jobs.
	\end{theorem}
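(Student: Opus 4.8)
The plan is to mimic the two‑step exchange strategy used for the single‑machine‑per‑stage case in \citep{PFBPaper1}, lifting it to the flexible setting with $m_i$ parallel machines per stage. First I would show that an arbitrary feasible schedule can be rearranged into a permutation schedule whose common job order is any prescribed earliest release date order, without increasing any of the four objectives; the optimality claims then follow at once.

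For the rearrangement I would process the stages $S_1,\dots,S_s$ from left to right. The crucial structural fact — and the key place the proportionate assumption is used — is that the completion time of a job at stage $S_i$ equals $p_i$ plus the start time of the batch containing it, independently of which job it is. Thus, at a fixed stage, one may freely reassign jobs to the already placed batch \emph{slots} (a slot being a machine together with a start time, of capacity $b_i$) as long as every job is available by the start time of its slot. A Hall‑type greedy matching — take the jobs in the prescribed release date order and assign each to the compatible slot of earliest start time that still has spare capacity — yields an assignment that (i) is monotone, so the job order at $S_i$ becomes exactly the prescribed release date order, and (ii) uses a slot multiset whose start times, sorted, are componentwise no later than in the original assignment, so that the sorted vector of stage‑$S_i$ completion times does not increase. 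Feasibility of some such assignment is inherited because Hall's condition ``enough slots start after time $\tau$'' depends only on the multiset of availability times at $S_i$, and that multiset only shrinks, in the sorted‑domination order, under the construction; moreover, after $S_{i-1}$ has been treated the availability times at $S_i$ already appear in release date order, so the procedure propagates. Iterating over all stages produces a permutation schedule in the prescribed order whose sorted vector of last‑stage completion times dominates that of the original schedule.

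Domination of the sorted completion‑time vector immediately gives the statement for $C_{\max}$ and $\sum C_j$, and for $\sum F_j$ it suffices to note that $\sum F_j=\sum C_j-\sum r_j$ with $\sum r_j$ constant. For $F_{\max}$ I would start from an optimal schedule of value $D$, give each job the deadline $r_j+D$ (so deadline order coincides with release date order), and run the same slot‑reassignment argument in its deadline‑feasibility form: swapping a pair that is out of release date order never moves the earlier‑deadline job past its deadline and moves the later‑deadline job only into a slot that was already in time for the earlier one, so the resulting earliest‑release‑date permutation schedule still meets every deadline and hence has $F_{\max}\le D$.

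The main obstacle is the per‑stage reassignment lemma itself once batching and parallel machines are combined: one must verify Hall's condition for this particular slot structure, fix one tie‑breaking rule for equal release dates and use it at every stage, and check that repairing stage $S_i$ neither destroys feasibility at $S_{i+1}$ nor worsens the sorted completion‑time vector there. In the PFB case the slots at a stage are automatically totally ordered in time, so the matching degenerates to pairwise adjacent swaps of batches; the flexible case is precisely where this step has to be carried out with care.
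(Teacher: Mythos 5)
Your proposal is correct and follows essentially the same route as the paper: the paper's proof (via its Lemma on release-date permutation schedules, lifted from Lemma~2 of \citep{PFBPaper1}) is exactly your per-stage reassignment of jobs, in earliest release date order, to the existing batch slots ordered by start time, with feasibility propagating stage by stage because the multiset of availability times at the next stage does not get worse. The only minor difference is in the conclusion: the paper keeps the batch structure fixed so that the multi-set of completion times is preserved \emph{exactly}, and then handles $F_{\max}$ by a pigeonhole comparison of completion times against release dates, whereas you argue via sorted domination plus an EDD-style deadline exchange with deadlines $r_j+D$; both arguments are valid and amount to the same counting.
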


	From now on, we assume that jobs are indexed in earliest release date order and restrict our attention to permutation schedules where the job order is given by the indices. We therefore drop the notation of $\pi$. It remains to decide, for each stage, how the job set should be divided into batches on the individual machines and when to process these batches. In other words, every time a machine becomes idle and at least one job is available for processing, we have to decide how long to wait for the arrival of more jobs before starting the next batch. Waiting for additional jobs incurs the cost of delaying the already available jobs.
	
	\section{Lower Bounds for the Completion Times at each Stage}\label{Sec:LowerBound}
	When analyzing approximation or online algorithms, one typically compares the quality of the solution produced by the algorithm with the optimal offline solution. However, for many problems, including PFFBs, it is difficult to make a precise statement about the quality of such an optimal solution in the offline scenario. In these cases, a common approach is to use a lower bound for comparison instead. Therefore, in the first part of this section we develop a lower bound $c^*_{ij}$ for the completion time $c_{ij}$ of each job $J_j$ at each stage $S_i$. No feasible permutation schedule with job order given by the indices can yield smaller completion times. 
	
	In the second part of this section, we show how this lower bound can be interpreted as a solution of a proportionate flexible flow shop problem without batching machines.
	
	Finally, we conclude the section by comparing our bound to another one given by \citet{SungEtAl:ProblemReduction}.
	
	\subsection{Recursive formula for the lower bounds}
	
	We start by observing two properties that hold for the completion times of any permutation schedule with job order $J_1,J_2\dots,J_n$.
	
	Firstly, since a job needs to finish stage $S_{i-1}$ before it can be started at stage $S_i$, we have
	\begin{align}
	c_{ij} \geq c_{(i-1)j} + p_{i}\label{Equ:ProcTime}
	\end{align}
	for all $i\in\{2,3,\dots, s\}$ and $j\in[n]$.
	
	Secondly, for a stage index $i$ and a job index $j$, consider the two jobs $J_j$ and $J_{j-m_ib_i}$. Suppose there exists a point in time $t$ at which both jobs are simultaneously processed at stage $S_i$. Due to the fixed job permutation, this would imply that all the $m_ib_i+1$ jobs \mbox{$J_{j-m_ib_i}$, $J_{j-m_ib_i+1}$, $\dots$, $J_{j}$} would be processed at stage $S_i$ at this time $t$. However, this contradicts the batch capacity restriction $b_i$ of the $m_i$ machines. Hence, such a point in time $t$ cannot exist and we may conclude
	\begin{align}
	c_{ij}\geq c_{i(j-m_ib_i)} + p_i\label{Equ:BatchSize}
	\end{align}
	for all $i\in[s]$ and all $j=m_ib_i+1, m_ib_i+2,\dots,n$.
	
	Now we construct the desired lower bound via recursion, using the right-hand sides of the two properties above. 
	As starting values, we define
	$
	c^*_{0j}=r_j
	$
	for all $j\in[n]$ and
	$
	c^*_{ij}=-\infty
	$
	for $i\in[s]$ and $j\leq 0$. Then, we define
	\begin{align}
	c^*_{ij}=\max\{c^*_{(i-1)j}, c^*_{i(j-m_ib_i)}\}+p_i.\label{Equ:LowerBound}
	\end{align}
	
	Clearly, by inductive application of the two properties (\ref{Equ:ProcTime}) and (\ref{Equ:BatchSize}), the values $c^\ast_{ij}$ defined this way are a lower bound for the completion times of a permutation schedule for a PFFB, as stated in the following lemma.
	
	\begin{lemma}\label{Lem:LowerBound}
		Any feasible permutation schedule with job order $J_1,J_2,\dots,J_n$ satisfies $c_{ij}\geq c^*_{ij}$ for all $i\in[s]$ and $j\in[n]$.
	\end{lemma}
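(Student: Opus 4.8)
The plan is to formalize the remark preceding the lemma: the inequality $c_{ij}\ge c^*_{ij}$ is obtained by a double induction mirroring the recursion~(\ref{Equ:LowerBound}), in which $c^*_{ij}$ is built from $c^*_{(i-1)j}$ (smaller stage index) and $c^*_{i(j-m_ib_i)}$ (smaller job index). I would induct on the stage $i$ from $1$ to $s$, and for each fixed $i$ run an inner induction on the job index $j$ from $1$ to $n$. It is convenient to extend the notation by $c_{0j}:=r_j$, so that the processing-time bound~(\ref{Equ:ProcTime}) reads $c_{1j}\ge c_{0j}+p_1$ and holds for $i=1$ as well — a job released at $r_j$ cannot complete stage $S_1$ before $r_j+p_1$ — making both~(\ref{Equ:ProcTime}) and~(\ref{Equ:BatchSize}) available uniformly for all $i\in[s]$.

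In every case, the argument is the same two-line computation. If $j\le m_ib_i$, then $j-m_ib_i\le 0$, so $c^*_{i(j-m_ib_i)}=-\infty$ and~(\ref{Equ:LowerBound}) collapses to $c^*_{ij}=c^*_{(i-1)j}+p_i$; then~(\ref{Equ:ProcTime}) together with the outer induction hypothesis (for $i\ge 2$) or the release-date bound (for $i=1$) gives $c_{ij}\ge c^*_{(i-1)j}+p_i=c^*_{ij}$. If instead $j>m_ib_i$, then~(\ref{Equ:ProcTime}) and the outer hypothesis give $c_{ij}\ge c^*_{(i-1)j}+p_i$, while~(\ref{Equ:BatchSize}) and the inner hypothesis give $c_{ij}\ge c^*_{i(j-m_ib_i)}+p_i$; taking the maximum of the two right-hand sides and applying~(\ref{Equ:LowerBound}) yields $c_{ij}\ge c^*_{ij}$. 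The base cases of both inductions are absorbed into this dichotomy: for $j\le m_ib_i$ no inner hypothesis is needed, and for $i=1$ no outer hypothesis is needed.

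There is essentially no obstacle; the content of the lemma is just the unwinding of the two structural inequalities already derived before it. The only points needing care are the boundary conventions — the $-\infty$ values, which make~(\ref{Equ:LowerBound}) discard its second argument precisely when the batch-capacity bound~(\ref{Equ:BatchSize}) is vacuous — and the treatment of stage $i=1$, where~(\ref{Equ:ProcTime}) is not literally available and must be replaced by the elementary fact that no job finishes stage $S_1$ before $r_j+p_1$. Finally, I would note that the argument uses only feasibility of the permutation schedule, so it applies to every such schedule, as claimed.
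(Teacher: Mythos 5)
Your argument is correct and is exactly the induction the paper has in mind: its proof of this lemma is the one-line instruction to inductively apply~(\ref{Equ:ProcTime}) and~(\ref{Equ:BatchSize}), which your double induction (with the $-\infty$ boundary convention and the release-date bound $c_{1j}\geq r_j+p_1$ standing in for~(\ref{Equ:ProcTime}) at stage $S_1$) spells out in full. No issues.
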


Before we proceed to the next part, we give an example to show that the lower bound cannot always be achieved by a feasible schedule. Hence, it is not tight in general.

\begin{example}\label{Exa:NotTight}
	Consider a PFFB instance without release dates, consisting of three stages with one machine each and two jobs (i.e., $s=3$, $m_1 = m_2 = m_3 = 1$, \mbox{$n=2$}). Processing times and batch capacities at each stage are given by $p_1=b_1=p_3=b_3=1$ and $p_2=b_2=2$. The only batching decision to make is whether to batch both jobs together at stage $S_2$ or not. The two corresponding permutation schedules are illustrated in Fig.~\ref{Fig:NotTight}. For ease of notation, we identify each stage $S_i$ with its single associated machine $M_i$. In both cases we have $c_{32}=6$. However, recursively applying (\ref{Equ:LowerBound}) yields $c^*_{32}=5$, illustrated in the third part of Fig.~\ref{Fig:NotTight}.
	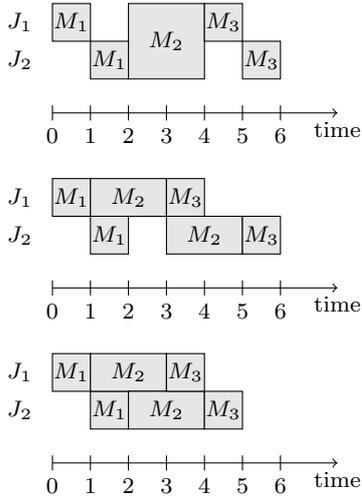
\begin{figure}[ht]
		\centering
		\begin{tikzpicture}
		\newcommand{\step}{0.5}
		\grid{6}{2}{\step}{north}
		\draw [fill=mygray] (0*\step, 1*\step) rectangle (1*\step, 2*\step) node [pos=0.5]{$M_1$};
		\draw [fill=mygray] (1*\step, 0*\step) rectangle (2*\step, 1*\step) node [pos=0.5]{$M_1$};
		\draw [fill=mygray] (2*\step, 0*\step) rectangle (4*\step, 2*\step) node [pos=0.5]{$M_2$};
		\draw [fill=mygray] (4*\step, 1*\step) rectangle (5*\step, 2*\step) node [pos=0.5]{$M_3$};
		\draw [fill=mygray] (5*\step, 0*\step) rectangle (6*\step, 1*\step) node [pos=0.5]{$M_3$};
		\end{tikzpicture}
		\vspace{1em}\\
		\begin{tikzpicture}
		\newcommand{\step}{0.5}
		\grid{6}{2}{\step}{north}
		\draw [fill=mygray] (0*\step, 1*\step) rectangle (1*\step, 2*\step) node [pos=0.5]{$M_1$};
		\draw [fill=mygray] (1*\step, 0*\step) rectangle (2*\step, 1*\step) node [pos=0.5]{$M_1$};
		\draw [fill=mygray] (1*\step, 1*\step) rectangle (3*\step, 2*\step) node [pos=0.5]{$M_2$};
		\draw [fill=mygray] (3*\step, 0*\step) rectangle (5*\step, 1*\step) node [pos=0.5]{$M_2$};
		\draw [fill=mygray] (3*\step, 1*\step) rectangle (4*\step, 2*\step) node [pos=0.5]{$M_3$};
		\draw [fill=mygray] (5*\step, 0*\step) rectangle (6*\step, 1*\step) node [pos=0.5]{$M_3$};
		\end{tikzpicture}
		\vspace{1em}\\
		\begin{tikzpicture}
		\newcommand{\step}{0.5}
		\grid{6}{2}{\step}{north}
		\draw [fill=mygray] (0*\step, 1*\step) rectangle (1*\step, 2*\step) node [pos=0.5]{$M_1$};
		\draw [fill=mygray] (1*\step, 0*\step) rectangle (2*\step, 1*\step) node [pos=0.5]{$M_1$};
		\draw [fill=mygray] (1*\step, 1*\step) rectangle (3*\step, 2*\step) node [pos=0.5]{$M_2$};
		\draw [fill=mygray] (2*\step, 0*\step) rectangle (4*\step, 1*\step) node [pos=0.5]{$M_2$};
		\draw [fill=mygray] (3*\step, 1*\step) rectangle (4*\step, 2*\step) node [pos=0.5]{$M_3$};
		\draw [fill=mygray] (4*\step, 0*\step) rectangle (5*\step, 1*\step) node [pos=0.5]{$M_3$};
		\end{tikzpicture}		
		\caption{The first two figures show the two optimal permutation schedules (ordered by their indices) in Example \ref{Exa:NotTight}; the third figure shows the \emph{infeasible} schedule implied by lower bound (\ref{Equ:LowerBound}), where machine $M_2$ is allowed to start a new batch while another is still running.}
		\label{Fig:NotTight}
	\end{figure}
\end{example}

\subsection{The lower bounds as completion times of a proportionate flexible flow shop}
Given an instance of PFFB, consider the following instance of a proportionate flexible flow shop (PFF) problem: at each stage, instead of $m_i$ parallel batching machines with maximum batch size $b_i$, there are $m_ib_i$ parallel machines with maximum batch size $1$. 
In other words, any batching machine with batch size $b_i$ at stage $S_i$ is replaced by $b_i$ identical parallel machines without batching. All other data of the instance remain the same.
Considering the original PFFB instance, we call the instance of PFF constructed this way \emph{the corresponding instance without batching}. 

The main difference between the PFFB instance and its corresponding instance without batching is that the $m_ib_i$ machines in the instance without batching can start jobs independently from one another, whereas in the PFFB instance, jobs in the same batch have to be started at the same time.
So the corresponding instance without batching allows for $m_ib_i$ independent starts, while the PFFB setting only allows for $m_i$ many.

Clearly, any feasible schedule for the PFFB instance implies a feasible schedule in the corresponding instance without batching, by keeping all start and finish times the same and only splitting up batches across machines such that one job runs on each machine.
The reverse does not work. Indeed, considering again Example \ref{Exa:NotTight}, we can see that the third part of Fig.~\ref{Fig:NotTight} shows a solution to the corresponding instance without batching which cannot be transformed into a solution for the PFFB instance.

The next theorem shows that an optimal schedule for the PFF instance is determined by the values $c^*_{ij}$, $i\in[s]$, $j\in[n]$, from (\ref{Equ:LowerBound}).

\begin{theorem}\label{Thm:PHF}
	Consider a PFFB instance with a regular objective function for which permutation schedules are optimal. Then the values $c^*_{ij}$, $i\in[s]$, $j\in[n]$, are the completion times of an optimal schedule of the corresponding PFF instance without batching.
\end{theorem}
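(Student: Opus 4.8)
The plan is to sandwich the optimal PFF schedule from below and from above by the array $c^*_{ij}$: first show $c^*_{ij}$ is a lower bound for the completion times of any relevant schedule of the corresponding PFF instance, then exhibit an explicit feasible schedule that attains it, and finally invoke regularity of the objective to conclude that this schedule is optimal.

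First I would observe that the corresponding PFF instance is itself a PFFB instance, namely the one with all batch capacities equal to $1$ and with $m_i b_i$ (ordinary) machines at stage $S_i$, carrying the same objective function. Hence Theorem~\ref{Thm:PermI} (or, more generally, the standing hypothesis that permutation schedules are optimal for this objective) applies to it, and since jobs are indexed in earliest release date order, there is an optimal schedule of the PFF instance that is a permutation schedule with the jobs in index order $J_1,\dots,J_n$; it suffices to compare against such schedules. Moreover, Lemma~\ref{Lem:LowerBound} applied to this PFF instance yields precisely the bound $c_{ij}\ge c^*_{ij}$: the relevant quantity ``number of machines times batch capacity'' at stage $S_i$ equals $m_ib_i\cdot 1=m_ib_i$, so Properties (\ref{Equ:ProcTime}) and (\ref{Equ:BatchSize}), and therefore the recursion (\ref{Equ:LowerBound}), are literally unchanged. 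Thus every feasible permutation schedule of the PFF instance satisfies $c_{ij}\ge c^*_{ij}$ for all $i\in[s]$, $j\in[n]$.

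Second, and this is the core of the argument, I would construct a feasible PFF schedule with $c_{ij}=c^*_{ij}$ for all $i,j$. Set the start time of $J_j$ at stage $S_i$ to $\sigma_{ij}:=c^*_{ij}-p_i$, and assign $J_j$ to machine number $((j-1)\bmod m_ib_i)+1$ of stage $S_i$, i.e.\ distribute the index-ordered jobs round-robin over the $m_ib_i$ machines of each stage. Feasibility then reduces to three one-line checks, each an immediate consequence of (\ref{Equ:LowerBound}): (i) release dates, $\sigma_{1j}=\max\{c^*_{0j},c^*_{1(j-m_1b_1)}\}=\max\{r_j,c^*_{1(j-m_1b_1)}\}\ge r_j$; (ii) stage precedence, $\sigma_{ij}=\max\{c^*_{(i-1)j},c^*_{i(j-m_ib_i)}\}\ge c^*_{(i-1)j}=c_{(i-1)j}$; and (iii) non-overlap of the two consecutive jobs $J_j$ and $J_{j+m_ib_i}$ assigned to the same machine at stage $S_i$, which amounts to $c^*_{i(j+m_ib_i)}\ge c^*_{ij}+p_i$ and holds because the maximum defining $c^*_{i(j+m_ib_i)}$ in (\ref{Equ:LowerBound}) contains the term $c^*_{i((j+m_ib_i)-m_ib_i)}=c^*_{ij}$. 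Alongside this I would record that $c^*_{ij}$ is nondecreasing in $j$ for each fixed $i$ — a short induction on $i$ with an inner induction on $j$, using the recursion together with the earliest-release-date indexing — which confirms that the schedule just built is a genuine permutation schedule in index order, so that the lower-bound comparison of the previous paragraph indeed applies to it.

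Finally, combining the two parts: the constructed schedule has $C_j=c_{sj}=c^*_{sj}$, and by the first paragraph $c^*_{sj}\le C_j(\varsigma)$ for the optimal permutation schedule $\varsigma$ guaranteed by Theorem~\ref{Thm:PermI}; since the objective function is regular, hence nondecreasing in the completion times $C_1,\dots,C_n$, the constructed schedule is itself optimal, and its completion times at every stage are exactly the values $c^*_{ij}$. I expect the only genuine obstacle to be getting the round-robin machine assignment and its index arithmetic exactly right so that check (iii) falls out cleanly from (\ref{Equ:LowerBound}); the monotonicity remark and the other two checks are routine bookkeeping around the recursion.
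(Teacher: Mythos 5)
Your proof is correct, and it reaches the same schedule as the paper but justifies it by a noticeably different logical scaffolding. The paper's proof is a direct greedy argument: since no batching (and hence no waiting for fuller batches) is needed in the PFF, an optimal permutation schedule starts each job at each stage as soon as it has cleared the previous stage and as soon as $J_{j-m_ib_i}$ has cleared the current one, and these earliest start times satisfy exactly the recursion \eqref{Equ:LowerBound}; the optimality of this earliest-start schedule and its feasibility (machine assignment, order preservation) are asserted rather than verified. You instead sandwich: you observe that the PFF instance is itself a PFFB with unit batch capacities and $m_ib_i$ machines per stage, so Lemma~\ref{Lem:LowerBound} applies verbatim and gives $c_{ij}\ge c^*_{ij}$ for every permutation schedule in index order; you then exhibit the explicit round-robin schedule with start times $c^*_{ij}-p_i$, check feasibility (release dates, stage precedence, non-overlap of jobs $J_j$ and $J_{j+m_ib_i}$ on a common machine, plus monotonicity of $c^*_{ij}$ in $j$), and invoke regularity together with Theorem~\ref{Thm:PermI} applied to the PFF instance to conclude optimality. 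What your route buys is rigor at exactly the points the paper glosses over — a concrete machine assignment, an explicit feasibility check via \eqref{Equ:LowerBound}, and an optimality argument that does not rely on the informal ``no reason to wait'' claim; what the paper's route buys is brevity and the interpretive message that $c^*$ is literally the earliest-possible completion-time profile of the PFF. The one hypothesis both arguments share (and which you correctly flag) is that permutation schedules in release-date order are optimal for the PFF instance itself, which follows from Theorem~\ref{Thm:PermI} for the four objectives actually used later.
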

\begin{proof}
Since there are no batching machines involved in a PFF, there is no need to wait for the arrival of other jobs in order to achieve a fuller batch. Therefore, an optimal permutation schedule for the PFF can be achieved by starting each job $J_j$ at each stage $S_i$ as soon as the following two conditions are satisfied:
\begin{itemize}
	\item The job $J_j$ has finished stage $S_{i-1}$ (or has been released, if $i=1$).
	\item There is a machine available for processing job $J_j$ at stage $S_i$ and all jobs $J_1, J_2, \ldots, J_{j-1}$ have already been started at that stage. Due to the fixed permutation and the number $m_ib_i$ of machines at stage $S_i$, this is the case as soon as job $J_{j-m_ib_i}$ has finished stage $S_i$ (or immediately, if $j \leq m_ib_i$).
\end{itemize}
Putting these conditions together, the completion time of job $J_j$ at stage $S_i$ in an optimal permutation schedule can be calculated recursively by
\begin{align*}
c_{ij}=\max\{c_{(i-1)j}, c_{i(j-m_ib_i)}\} + p_i.
\end{align*}
This is exactly the same formula as in the definition \eqref{Equ:LowerBound} of the values $c^*_{ij}$, $i\in[s]$, $j\in[n]$.\qed
\end{proof}

\subsection{Comparison with the bound of \citet{SungEtAl:ProblemReduction}}

As mentioned in the literature review, \citet{SungEtAl:ProblemReduction} propose heuristic algorithms to schedule a PFB, that is, a PFFB with only one machine at each stage, without release dates. In order to evaluate their experiments, they provide a lower bound for the makespan. Transferred to our notation it reads as follows:
\begin{align}
\begin{split}
\label{Equ:Sungbound}
\max\left\{\left\lceil\frac{j}{b_1}\right\rceil p_1 + \sum_{i=2}^{k-1} p_i + \left\lceil\frac{n-j+1}{b_k}\right\rceil p_k\right.\\\left.+ \sum_{i=k+1}^{s} p_i \st j\in[n], k\in[s] \right\}.
\end{split}
\end{align}

We show that our bound is an improvement, i.e., that $c^*_{sn}$ is at least as large as the bound given by (\ref{Equ:Sungbound}) and that for at least one instance, our bound is strictly larger.

Fix $j\in[n]$ and $k\in[s]$ as the maximizers in (\ref{Equ:Sungbound}). Applying the recursive definition (\ref{Equ:LowerBound}) with $m_i=1$ for all $i\in[s]$ and $r_j=0$ for all $j\in[n]$, we obtain the following four inequalities.
\begin{align*}
	c^*_{1j}&\geq\left\lceil\frac{j}{b_1}\right\rceil p_1,\\
	c^*_{kj}&\geq c^*_{1j}+\sum_{i=2}^{k} p_i,\\
	\begin{split}
	c^*_{kn}&\geq c^*_{kj}+\left\lfloor\frac{n-j}{b_k}\right\rfloor p_k\\
			&\qquad = c^*_{kj}+\left(\left\lceil\frac{n-j+1}{b_k}\right\rceil-1\right)p_k,
	\end{split}\\
	c^*_{sn}&\geq c^*_{kn} + \sum_{i=k+1}^{s} p_i.
\end{align*}
Summing up these four inequalities yields that $c^*_{sn}$ is at least as large as the bound (\ref{Equ:Sungbound}) by \citet{SungEtAl:ProblemReduction}. Moreover, the following example shows that $c^*_{sn}$ can be strictly larger than (\ref{Equ:Sungbound}). Consider a PFB (i.e., only one machine per stage) instance with $s=3$ stages. Let $n=6$ be the number of jobs and let $p_1=b_1=1$, $p_2=3$, $b_2=2$, $p_3=5$, and $b_3=3$. One can easily check that for this instance the maximum in $(\ref{Equ:Sungbound})$ is 16, which is attained either for $j=2$ and $k=2$ or for $j=3$ and $k=3$. However, for our bound it holds that
\begin{align*}
	c^*_{36}\geq 5+ c^*_{33}\geq 10 + c^*_{23}\geq 13+c^*_{21}\geq 16+c^*_{11}\geq 17.
\end{align*}
Hence, $c^*_{sn}$ is a strict improvement upon the bound $(4)$ of \citet{SungEtAl:ProblemReduction}.

We can compute $c^*_{sn}$ in time $\mathcal{O}(ns)$ by recursively using (\ref{Equ:LowerBound}). Since the computation of (\ref{Equ:Sungbound}) takes the same asymptotic runtime, our improvement of the bound incurs no increase in computational cost.

\section{The Never-Wait Algorithm}\label{Sec:NeverWait}
	
This section is devoted to establishing a simple, yet reasonably effective online scheduling rule for a PFFB. Again we focus on permutation schedules with job order $J_1, J_2,\dots,J_n$. Hence, on each stage, when a machine is idle and jobs are available, the main scheduling decision is how long to wait until starting a new batch. One obvious strategy is to always immediately start a batch when jobs are present and a machine is idle. We call this strategy the Never-Wait algorithm.
Another strategy, in some ways the opposite to the Never-Wait algorithm, is to always wait until a full batch can be started. This strategy is called the Full-Batch algorithm. 
In this section, we show that the Never-Wait algorithm is a 2-approximation with respect to various objective functions and, hence, 2-competitive when seen as an online algorithm. Furthermore, we show that the Full-Batch algorithm admits no constant approximation guarantee at all.
	
\begin{definition}
	The \emph{Never-Wait algorithm} for scheduling a PFFB is defined as follows: A batch is started at a stage whenever at least one job is available for processing at that stage and at least one machine is idle at that stage. The size of the new batch is chosen as large as possible, i.e., the minimum of the batch capacity at the stage and the number of available jobs at the stage when the batch is started.
	
	Note that, if at stage $S_i$ several machines are idle but there are not more than $b_i$ jobs available, only one machine is started. On the other hand, if there are more than $b_i$ jobs available at stage $S_i$, then more than one machine can be started at the same time.
\end{definition}

In the following, let $c_{ij}$, $i\in[s]$, $j\in[n]$, be the completion times resulting from the Never-Wait algorithm and $c^*_{ij}$ be the bound defined in Section \ref{Sec:LowerBound}. We also write $c_{0j}=c^*_{0j}=r_j$ for the the time at which $J_j$ becomes available at $S_1$.

Since the Never-Wait algorithm greedily starts as many available jobs as possible, the following property holds.

\begin{lemma} \label{lem:NeverWaitGreedy}
	Suppose that, for some $j \in \left[n \right]$ and some $i \in \left[ s \right]$, it holds that $c_{ij} > c_{(i-1)j} + 2 p_i$. 
	Then $j > m_ib_i$ and $c_{i(j-m_ib_i)} \geq c_{ij} - p_i$.
\end{lemma}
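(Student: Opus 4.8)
The plan is to exploit the purely greedy behaviour of the Never-Wait rule, together with the fact that on the permutation schedules under consideration the batches at every stage are started in order of increasing job index (so each batch is a block of consecutive indices). Write $\sigma:=c_{ij}-p_i$ for the time at which the batch containing $J_j$ is started at stage $S_i$. The hypothesis $c_{ij}>c_{(i-1)j}+2p_i$ is the same as $\sigma-p_i>c_{(i-1)j}$; since $J_j$ becomes available at $S_i$ at time $c_{(i-1)j}$ (with $c_{0j}=r_j$) and is not started before $\sigma$, this says that $J_j$ is already waiting at $S_i$ strictly before time $\sigma-p_i$ but is not scheduled until $\sigma$.

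The first and main step is to show that throughout the whole interval $[c_{(i-1)j},\sigma)$ every machine of $S_i$ is busy and, moreover, every batch started at $S_i$ during this interval has the full size $b_i$. For this I would observe that at every instant of the interval there is an available, not-yet-started job at $S_i$, namely $J_j$; hence no machine can sit idle (the Never-Wait rule would use it at once), so all $m_i$ machines are busy throughout. And whenever a batch \emph{is} started in this interval it is taken as large as possible from the available not-yet-started jobs in index order; were it not full it would contain \emph{all} available not-yet-started jobs, in particular $J_j$ — impossible, since $J_j$'s batch starts only at the strictly later time $\sigma$. So every batch started during $[c_{(i-1)j},\sigma)$ has size exactly $b_i$.

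Next I would apply this to the sub-interval $[\sigma-p_i,\sigma)$, which by the previous paragraph lies inside $[c_{(i-1)j},\sigma)$ and has length exactly $p_i$. For each machine $k$ of $S_i$, let $s_k$ be the start time of the last batch that machine $k$ begins strictly before $\sigma$; since machine $k$ is busy throughout $[c_{(i-1)j},\sigma)$ and every batch lasts exactly $p_i$, we must have $s_k\ge\sigma-p_i$, i.e. $s_k\in[\sigma-p_i,\sigma)$. By the first step each of these $m_i$ distinct batches is full, so at least $m_ib_i$ jobs are started at $S_i$ at times in $[\sigma-p_i,\sigma)$. Let $M$ and $M'$ denote the numbers of jobs started at $S_i$ strictly before time $\sigma$ and strictly before time $\sigma-p_i$, respectively; because jobs are started in index order these are exactly $J_1,\dots,J_M$ and $J_1,\dots,J_{M'}$. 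Since $J_j$ is first started at time $\sigma$ we get $M\le j-1$, and the count just made gives $M-M'\ge m_ib_i$, hence $M'\le j-1-m_ib_i$.

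Finally, $M'\ge 0$ gives $j\ge m_ib_i+1>m_ib_i$, which is the first assertion; and $j-m_ib_i\ge M'+1$ means $J_{j-m_ib_i}$ is not among the first $M'$ jobs, so it is started at $S_i$ no earlier than $\sigma-p_i$, whence $c_{i(j-m_ib_i)}\ge(\sigma-p_i)+p_i=\sigma=c_{ij}-p_i$, which is the second. I expect the delicate part to be the first step: one must handle half-open time intervals and the single instants at which batches begin and end carefully, and must invoke the index-order structure of the schedule precisely enough to guarantee that every batch started during the critical window is full. Once that is in place, the counting and the two concluding inequalities are routine.
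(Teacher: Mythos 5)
Your proof is correct and follows essentially the same argument as the paper: you exploit the greediness of the Never-Wait rule to show that during the length-$p_i$ window $[c_{ij}-2p_i,\,c_{ij}-p_i)$ every machine at $S_i$ starts a full batch (else $J_j$ would have been started), and then use the index-order structure of the permutation schedule to count $m_ib_i$ jobs ahead of $J_j$, yielding $j>m_ib_i$ and $c_{i(j-m_ib_i)}\geq c_{ij}-p_i$. The only cosmetic difference is that you count batch \emph{starts} in that window (after first treating the whole waiting interval $[c_{(i-1)j},\sigma)$), whereas the paper counts the corresponding \emph{completions} in $[c_{ij}-p_i,c_{ij})$; the substance is identical.
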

\begin{proof}
	Notice that $c_{ij} > c_{(i-1)j} + 2 p_i$ implies that $J_j$ is available but not started at stage $S_i$ during the complete interval $\lambda = \left[c_{ij}-2p_i,c_{ij}-p_i\right[$. Since interval $\lambda$ has length $p_i$ and at least one job is available during all of interval $\lambda$, each of the $m_i$ machines starts processing exactly one batch during $\lambda$. Moreover, since $J_j$ is already available, but not included in one of these batches, all these batches must be full batches. Hence, we obtain that at least $m_i b_i$ jobs complete stage $S_i$ in the time interval $\left[c_{ij}-p_i,c_{ij}\right[$. In particular, using the fixed job permutation, this implies that $j > m_ib_i$ and that $J_{j-m_ib_i}$ is completed not before time $c_{ij}-p_i$. \qed
\end{proof}

Now we are ready to show that the completion times produced by the Never-Wait algorithm can be bounded from above in terms of the lower bound $c_{ij}^*$ of the previous section.
	
\begin{theorem}\label{NeverWaitBound}
	For all $i\in[s]$ and $j\in[n]$, the completion time $c_{ij}$ in the Never-Wait algorithm satisfies
	\begin{align*}
	c_{ij}\leq c^*_{ij}+\sum_{i'=1}^i p_{i'}.
	\end{align*}
\end{theorem}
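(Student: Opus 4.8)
The plan is to prove the bound $c_{ij} \leq c^*_{ij} + \sum_{i'=1}^{i} p_{i'}$ by induction on $i+j$ (or, equivalently, a nested induction first on the stage index $i$ and then on the job index $j$), mirroring the recursive structure of the definition \eqref{Equ:LowerBound} of $c^*_{ij}$. The base cases are $i=0$ (where $c_{0j} = c^*_{0j} = r_j$, so the bound holds with an empty sum) and $j \leq 0$ (where $c^*_{ij} = -\infty$ and there is nothing to prove, i.e.\ no such job exists). For the inductive step, fix $i \in [s]$ and $j \in [n]$ and assume the bound holds for all pairs with smaller $i+j$, in particular for $(i-1,j)$ and for $(i, j-m_ib_i)$.

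The key case distinction is exactly the hypothesis of Lemma~\ref{lem:NeverWaitGreedy}. \textbf{Case 1:} $c_{ij} \leq c_{(i-1)j} + 2p_i$. Then, using the induction hypothesis at $(i-1,j)$,
\begin{align*}
c_{ij} \leq c_{(i-1)j} + 2p_i \leq c^*_{(i-1)j} + \sum_{i'=1}^{i-1} p_{i'} + 2p_i = c^*_{(i-1)j} + p_i + \sum_{i'=1}^{i} p_{i'}.
\end{align*}
Since $c^*_{ij} \geq c^*_{(i-1)j} + p_i$ by \eqref{Equ:LowerBound}, the right-hand side is at most $c^*_{ij} + \sum_{i'=1}^{i} p_{i'}$, as desired. \textbf{Case 2:} $c_{ij} > c_{(i-1)j} + 2p_i$. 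Then Lemma~\ref{lem:NeverWaitGreedy} gives $j > m_ib_i$ and $c_{i(j-m_ib_i)} \geq c_{ij} - p_i$, i.e.\ $c_{ij} \leq c_{i(j-m_ib_i)} + p_i$. Applying the induction hypothesis at $(i, j-m_ib_i)$,
\begin{align*}
c_{ij} \leq c_{i(j-m_ib_i)} + p_i \leq c^*_{i(j-m_ib_i)} + \sum_{i'=1}^{i} p_{i'} + p_i.
\end{align*}
Again $c^*_{ij} \geq c^*_{i(j-m_ib_i)} + p_i$ by \eqref{Equ:LowerBound}, so the right-hand side is at most $c^*_{ij} + \sum_{i'=1}^{i} p_{i'}$, completing the step.

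The only slightly delicate point — and what I expect to be the main obstacle to make fully rigorous — is the very first inequality in each case, namely that whenever the Never-Wait algorithm delays $J_j$ at stage $S_i$ past $c_{(i-1)j} + 2p_i$, the delay must be ``explained'' by $m_ib_i$ full batches clearing ahead of it; this is precisely the content of Lemma~\ref{lem:NeverWaitGreedy}, so the work is already done and the induction just has to invoke it correctly. One should also double-check the edge condition in Case 1 when $i=1$: there $c_{(i-1)j} = r_j = c^*_{0j}$ and $\sum_{i'=1}^{0} p_{i'} = 0$, so the chain of inequalities still goes through verbatim. With Lemma~\ref{lem:LowerBound} and Lemma~\ref{lem:NeverWaitGreedy} in hand, the argument is a clean two-case induction and no heavy computation is required.
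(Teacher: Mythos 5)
Your proposal is correct and follows essentially the same argument as the paper: a two-dimensional induction (the paper phrases it as simultaneous induction on $i$ and $j$, yours on $i+j$, which covers the same recursive calls to $(i-1,j)$ and $(i,j-m_ib_i)$), with the identical case split on whether $c_{ij}\leq c_{(i-1)j}+2p_i$, invoking Lemma~\ref{lem:NeverWaitGreedy} in the second case and the recursion \eqref{Equ:LowerBound} to absorb the extra $p_i$ in both cases. No gaps; nothing further needed.
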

\begin{proof}
	We use a simultaneous induction on $i$ and $j$. Using $c_{0j}=r_j=c^*_{0j}$, the following arguments settle induction start ($i=1$) and induction step ($i>1$) at the same time.
	
	Let $i\in[s]$ be a stage index and let $j\in[n]$ be a job index. Suppose the claim is already proven for all pairs of indices $i'\leq i$ and $j'\leq j$ with either $i'<i$ or $j'<j$. We distinguish two cases.
	
	\textbf{Case 1}: $J_j$ waits for at most $p_i$ time units at stage $S_i$, i.e., $c_{ij}\leq 2p_i + c_{(i-1)j}$. In particular, by Lemma \ref{lem:NeverWaitGreedy}, this always holds if $j \leq m_ib_i$. We obtain
	\begin{align*}
	c_{ij}&\leq 2 p_i + c_{(i-1)j} \\&\stackrel{\text{ind.}}{\leq} p_i + p_i + c^*_{(i-1)j} + \sum_{i'=1}^{i-1} p_{i'} \\&\stackrel{\text{(\ref{Equ:LowerBound})}}{\leq} c^*_{ij}+\sum_{i'=1}^i p_{i'}.
	\end{align*}
	
	\textbf{Case 2}: $J_j$ waits for more than $p_i$ time units at stage $S_i$, i.e., $c_{ij}>c_{(i-1)j}+2p_i$. By Lemma \ref{lem:NeverWaitGreedy}, this implies that $j > m_ib_i$ and that $J_{j-m_ib_i}$ is completed not before time $c_{ij}-p_i$. Then we conclude
	\begin{align*}
	c_{ij}
	&\leq c_{i(j-m_ib_i)} + p_i \\
	&\stackrel{\text{ind.}}{\leq} c^*_{i(j-m_ib_i)}+p_i + \sum_{i'=1}^i p_{i'}\\
	&\stackrel{\text{(\ref{Equ:LowerBound})}}{\leq} c^*_{ij} + \sum_{i'=1}^i p_{i'}.\tag*{\qed}
	\end{align*}
\end{proof}
	
Using this, we obtain the desired competitiveness result.
	
\begin{corollary}\label{Cor:NeverWait}
	With respect to makespan, total completion time, maximum flow time and total flow time, the Never-Wait algorithm to schedule a PFFB is a $2$-com\-petitive online algorithm.
\end{corollary}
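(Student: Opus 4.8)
The plan is to combine the upper bound of Theorem~\ref{NeverWaitBound} with the lower bound of Lemma~\ref{Lem:LowerBound}, using Theorem~\ref{Thm:PermI} to make that comparison legitimate for each of the four objectives. Write $P=\sum_{i=1}^s p_i$ for the processing time a single job incurs on its way through all stages. As throughout this section, the Never-Wait algorithm produces a permutation schedule with job order $J_1,\dots,J_n$, and it is clearly an online algorithm: each batching decision depends only on the jobs released so far. Hence Theorem~\ref{NeverWaitBound} applies and gives $c_{sj}\leq c^*_{sj}+P$ for every $j\in[n]$. Now fix one of the four objectives. By Theorem~\ref{Thm:PermI} there is an optimal schedule $\varsigma^*$ for it which is a permutation schedule with the jobs in earliest-release-date (hence index) order, so Lemma~\ref{Lem:LowerBound} gives $c^*_{sj}\leq C_j(\varsigma^*)$ for all $j$. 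The second ingredient is the trivial bound $C_j(\varsigma^*)\geq r_j+P$, valid in every feasible schedule since $J_j$ cannot start at $S_1$ before time $r_j$ and must then traverse all $s$ stages; equivalently $P\leq F_j(\varsigma^*)=C_j(\varsigma^*)-r_j$.

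For the makespan, $\max_j c_{sj}\leq\max_j c^*_{sj}+P\leq C_{\max}(\varsigma^*)+C_{\max}(\varsigma^*)$, using $c^*_{sj}\leq C_j(\varsigma^*)\leq C_{\max}(\varsigma^*)$ and $P\leq C_{\max}(\varsigma^*)$. For the maximum flow time, we estimate each job separately: $F_j=c_{sj}-r_j\leq(c^*_{sj}-r_j)+P\leq(C_j(\varsigma^*)-r_j)+(C_j(\varsigma^*)-r_j)=2F_j(\varsigma^*)\leq 2F_{\max}(\varsigma^*)$, and taking the maximum over $j$ finishes that case.

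For the two sum objectives we add up the per-job estimates and absorb the $n$ copies of $P$. Since $P\leq C_j(\varsigma^*)$ for every $j$, we have $nP\leq\sum_j C_j(\varsigma^*)$, so $\sum_j c_{sj}\leq\sum_j c^*_{sj}+nP\leq\sum_j C_j(\varsigma^*)+\sum_j C_j(\varsigma^*)$. Similarly, $P\leq F_j(\varsigma^*)$ gives $nP\leq\sum_j F_j(\varsigma^*)$, whence $\sum_j(c_{sj}-r_j)\leq\sum_j(c^*_{sj}-r_j)+nP\leq\sum_j F_j(\varsigma^*)+\sum_j F_j(\varsigma^*)$.

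I do not expect a genuine obstacle here: the real work is already done in Theorem~\ref{NeverWaitBound}. The only points requiring care are invoking Theorem~\ref{Thm:PermI} so that the structural lower bound $c^*_{sj}$ can legitimately be compared against an optimum for the objective at hand, and the elementary observation $P\leq C_j(\varsigma^*)-r_j$, which is exactly what lets the additive error term $\sum_{i'=1}^s p_{i'}$ from Theorem~\ref{NeverWaitBound} be charged to a second copy of the optimal value. The rest is summation.
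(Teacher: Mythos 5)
Your proposal is correct and follows essentially the same route as the paper: it combines Theorem~\ref{Thm:PermI}, Lemma~\ref{Lem:LowerBound} and Theorem~\ref{NeverWaitBound} with the observation $C_j(\varsigma^*)\geq r_j+\sum_{i=1}^s p_i$, charging the additive error to a second copy of the optimal per-job value exactly as the paper does. The only difference is that you spell out the max/sum aggregation for each of the four objectives separately, which the paper leaves implicit.
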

\begin{proof}
	Using Theorem \ref{Thm:PermI}, we obtain that there exists an optimal schedule $\varsigma^*$ that is a permutation schedule with job order $J_1,J_2,\dots,J_n$. Let $\varsigma$ be the schedule produced by the Never-Wait algorithm. Using Theorem \ref{NeverWaitBound} and Lemma \ref{Lem:LowerBound} it follows for a fixed job index $j\in[n]$ that
	\begin{align*}
	C_j(\varsigma) &\leq c^*_{sj} + \sum_{i=1}^s p_i \\&\leq C_j(\varsigma^*) + C_j(\varsigma^*) - r_j = 2C_j(\varsigma^*) - r_j,
	\end{align*}	
	where the term $- r_j$ stems from $C_j(\varsigma^\ast) \geq c_{sj}^\ast \geq r_j + \sum_{i=1}^s p_i$.
	Thus we have $C_j(\varsigma) \leq 2 C_j(\varsigma^\ast) - r_j \leq 2 C_j(\varsigma^\ast)$ and $F_j(\varsigma) = C_j(\varsigma) - r_j \leq 2C_j(\varsigma^\ast) - 2 r_j = 2 F_j(\varsigma^\ast)$, which proves the desired statement. \qed
\end{proof}

Corollary \ref{Cor:NeverWait} and Theorem \ref{Th:LBFSum} imply the following corollary.

\begin{corollary}
	For the total flow time objective, there is no deterministic online algorithm which, in general, has a better competitive ratio than the Never-Wait algorithm.
\end{corollary}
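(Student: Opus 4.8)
The plan is to simply combine the two relevant results already established in the excerpt: the upper bound on the competitive ratio of the Never-Wait algorithm and the matching lower bound for the total flow time objective. By Corollary~\ref{Cor:NeverWait}, the Never-Wait algorithm is $2$-competitive for the total flow time objective. By Theorem~\ref{Th:LBFSum}, no deterministic online algorithm can achieve a competitive ratio less than $2$ for this objective. Therefore any deterministic online algorithm has competitive ratio at least $2$, which is exactly the ratio attained by Never-Wait, so no deterministic online algorithm can beat it in general.

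Concretely, I would argue as follows. Suppose, for contradiction, that some deterministic online algorithm $A$ has competitive ratio $\rho_A < 2$ for the total flow time objective on PFFBs. Then Theorem~\ref{Th:LBFSum} is violated, since that theorem asserts precisely that every deterministic online algorithm has competitive ratio at least $2$. Hence no such $A$ exists, i.e., the competitive ratio of Never-Wait, which equals $2$ by Corollary~\ref{Cor:NeverWait}, cannot be improved upon by any deterministic online algorithm.

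There is essentially no obstacle here: the statement is a direct logical consequence of the two cited results, and the proof is a one-line combination. The only thing to be careful about is phrasing: one should note that the lower bound instance in Theorem~\ref{Th:LBFSum} already lives in the PFFB model (indeed with $s=1$), so it applies to the same class of instances for which Never-Wait is analyzed, and that "in general" in the statement refers to worst-case competitive ratio over all instances, matching the quantification in both Corollary~\ref{Cor:NeverWait} and Theorem~\ref{Th:LBFSum}. Thus the corollary follows immediately.
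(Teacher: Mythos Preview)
Your proposal is correct and takes essentially the same approach as the paper: both combine Corollary~\ref{Cor:NeverWait} (the Never-Wait algorithm is $2$-competitive for $\sum F_j$) with Theorem~\ref{Th:LBFSum} (no deterministic online algorithm beats $2$ for $\sum F_j$) to conclude optimality. The paper's proof is simply the one-line version of your argument.
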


Next, we show by an example that the competitive ratio of the Never-Wait algorithm is not smaller than 2 with respect to any of the considered objectives.

\begin{example}\label{Exa:NeverWaitEpsilon}
	Consider the PFFB instance with only a single stage, with a fixed number $m_1\geq1$ of machines, a batch capacity $b_1 \geq 1$ to be chosen later, and  processing time $p_1=1$. For some small $\varepsilon>0$, suppose further there are $n=m_1b_1$ jobs, with release dates $r_j=(j-1)\varepsilon$ for $j\leq m_1$ and $r_j=m_1\varepsilon$ for $j\geq m_1+1$.
	
	The Never-Wait algorithm schedules the first $m_1$ jobs as singleton batches as soon as they arrive, filling all machines. All other jobs are not started before time $1$, when the first machine becomes idle again. Thus, none of the $n-m_1$ jobs $J_j$, $j \geq m_1 +1$, can be finished before time $2$. Hence, for the schedule $\varsigma$ produced by the \mbox{Never-Wait} algorithm, we obtain 
	\begin{align*}
	C_{\max}(\varsigma)&\geq 2,\\
	\sum C_j(\varsigma)&\geq 2n-m_1=2m_1b_1-m_1,\\
	F_{\max}(\varsigma)&\geq 2-m_1\varepsilon,\\
	\sum F_j(\varsigma)&\geq 2n-m_1-nm_1\varepsilon=2m_1b_1-m_1-m_1^2b_1\varepsilon.
	\end{align*}
	
	In contrast, consider the feasible schedule in which all machines remain idle until time $m_1\varepsilon$, when the last job becomes available. At this point in time, all the $n=m_1b_1$ jobs are partitioned into $m_1$ batches and started. For the resulting schedule $\varsigma'$ it follows that
	\begin{align*}
	C_{\max}(\varsigma')&\leq 1+m_1\varepsilon,\\
	\sum C_j(\varsigma')&\leq n + nm_1\varepsilon=m_1b_1+m_1^2b_1\varepsilon,\\
	F_{\max}(\varsigma')&\leq 1+m_1\varepsilon,\\
	\sum F_j(\varsigma')&\leq n + nm_1\varepsilon=m_1b_1+m_1^2b_1\varepsilon.
	\end{align*}
	
	Now, given any $\alpha> 0$, we show that $\varepsilon$ and $b_1$ can be chosen such that the ratio of the objective values is at least \mbox{$2-\alpha$}. Without loss of generality, assume that $\frac{4-\alpha}{\alpha}$ is an integer. If this is not the case, make $\alpha$ continuously smaller, until it is. Let $b_1 = \frac{4-\alpha}{\alpha}$ and $\varepsilon = \frac{1}{m_1 b_1}$.
	We obtain
	\begin{align*}
	\frac{C_{\max}(\varsigma)}{C_{\max}(\varsigma')}&\geq\frac{2}{1+m_1\varepsilon} = \frac{2}{1+ \frac{1}{b_1}} = \frac{2b_1}{b_1 + 1} \\ 
	& = \frac{8-2\alpha}{4} = 2 - \frac{1}{2} \alpha > 2 - \alpha;\\
	\frac{\sum C_j(\varsigma)}{\sum C_j(\varsigma')}&\geq\frac{2b_1-1}{b_1+m_1b_1\varepsilon} = \frac{2b_1-1}{b_1+ 1} = \frac{8-3 \alpha}{4} > 2 - \alpha; \\
	\frac{F_{\max}(\varsigma)}{F_{\max}(\varsigma')}&\geq\frac{2-m_1\varepsilon}{1+m_1\varepsilon} = \frac{2 - \frac{1}{b_1}}{1 + \frac{1}{b_1}} = \frac{2b_1 -1}{b_1 + 1} > 2 - \alpha; \\
	\frac{\sum F_j(\varsigma)}{\sum F_j(\varsigma')}&\geq\frac{2b_1-1-m_1b_1\varepsilon}{b_1+m_1b_1\varepsilon}
	 = \frac{2 b_1 - 2}{b_1+1} \\
	& = \frac{8 - 4 \alpha}{4} = 2 - \alpha.
	\end{align*}
		
	Hence, the Never-Wait algorithm does not have a competitive ratio less than 2 with respect to any of the considered objectives. Note that, while this example uses only a single stage, it can easily be extended to arbitrary many stages by using negligible processing times on all stages but the first one. Moreover, this example works for arbitrary values of $m_1$, which implies that the competitive ratio of the Never-Wait algorithm cannot be better than 2, no matter how many parallel machines per stage there are.
\end{example}

\begin{remark}\label{Rem:asymptotic1approx}
	Even though the Never-Wait algorithm is in general not better than 2-competitive, concerning the makespan and total completion time objectives, Theorem~\ref{NeverWaitBound} actually delivers a much stronger result than 2-competitiveness if the number of jobs is large.
	
	To see this, we first show that for any $i\in[s]$, we have $c^*_{ij}\geq\lceil \frac{j}{m_ib_i} \rceil p_i\eqqcolon\mathrm{LB}_{i,j}$. Indeed, with $k\coloneqq \lceil \frac{j}{m_ib_i} \rceil-1$ and $j'\coloneqq j-km_ib_i\geq 1$, recursive application of \eqref{Equ:LowerBound} yields
	\begin{align*}c^*_{ij}&\stackrel{\eqref{Equ:LowerBound}}{\geq} k p_i + c^*_{ij'} \textstyle\stackrel{\eqref{Equ:LowerBound}}{\geq} \lceil \frac{j}{m_ib_i} \rceil p_i + c^*_{(i-1)j'} \geq \mathrm{LB}_{i,j},\end{align*}
	where the last inequality follows from $c^*_{(i-1)j'}\geq 0$.
	
	Furthermore, $\mathrm{LB}_{i,j}$ tends to infinity for $j \rightarrow \infty$.
	Thus, in particular, if $n$ tends to infinity, also $c^*_{sn}$ tends to infinity, which is a lower bound for the makespan.
	
	In contrast, note that the difference between the makespan of the schedule produced by the Never-Wait algorithm and the optimal makespan is at most $\sum_{i=1}^s p_i$ by Theorem \ref{NeverWaitBound}. This difference stays constant if $n$ tends to infinity.
	
	Putting these things together, we obtain that, with respect to the makespan, the competitive ratio of the Never-Wait algorithm tends to 1, if $n$ tends to infinity.
	
	The same holds for the total completion time objective, as we argue now. Note that, with the same arguments as before, the total completion time of any feasible schedule is lower bounded by
	\[
	\sum_{j=1}^n c^*_{sj} \geq \sum_{j=1}^n \mathrm{LB}_{s,j} \geq \sum_{j=1}^n\left\lceil \frac{j}{m_sb_s} \right\rceil p_s \in \Omega(n^2).
	\]
	On the other hand, the difference between the total completion time of the schedule produced by the Never-Wait algorithm and the objective value of any optimal schedule is at most $n\sum_{i'=1}^s p_{i'}\in\mathcal{O}(n)$ by Theorem~\ref{NeverWaitBound}. Thus, also for the total completion time, we obtain that the competitive ratio of the Never-Wait algorithm tends to 1, if $n$ tends to infinity.
	
	However, a similar result cannot be achieved for our two flow time related objectives. To see this, note that Example~\ref{Exa:NeverWaitEpsilon} can be kind of ``copied'' arbitrarily often: given an instance with $n$ jobs for which the Never-Wait algorithm achieves a competitive ratio of at least $2-\alpha$ for some $\alpha>0$, introduce $n$ more jobs $J_{n+1}$ to $J_{2n}$ with release dates $r_j=r_{j-n}+M$, $j=n+1,n+2,\dots,2n$, for some large constant $M$. Then, both, the Never-Wait algorithm and an optimal offline algorithm, process the jobs $J_{n+1}$ to $J_{2n}$ in exactly the same way as they did process jobs $J_1$ to $J_n$, just shifted by $M$ time steps. By the definition of flow times (in contrast to completion times), this keeps the maximum flow time constant, while the total flow time is doubled for both algorithms. Thus, with respect to these two objectives, the competitive ratio is still at least $2-\alpha$. This procedure can be repeated arbitrarily often. Hence, the competitive ratio of the Never-Wait algorithm does not tend to 1 for $n\rightarrow\infty$ with respect to flow time related objectives.
\end{remark}

\subsection{The Full-Batch algorithm}

The Never-Wait algorithm can be seen as an extreme strategy, where all waiting time other than what is mandated by the scheduling constraints is avoided. Note that, for regular objective functions, it makes no sense to wait with starting a batch at stage $S_i$ if a machine is available and already $b_i$ jobs are waiting at stage $S_i$. Thus, it can be viewed as the opposite extreme to always wait until a full batch can be started.

\begin{definition}
	The \emph{Full-Batch algorithm} for scheduling a PFFB is defined by the following rule: At each stage, use only full batches (the last batch at stage $S_i$ may be less than full, if the batch capacity $b_i$ is not a divisor of the total number of jobs $n$).
\end{definition}

Note that the Full-Batch algorithm is not actually an online algorithm in the strict sense, because in order to know when to start the last batch one needs to know the number of jobs in advance. This is not the case in the standard online setting. Therefore, we view the Full-Batch algorithm primarily as an offline approximation algorithm. It can also be seen as an online algorithm in a relaxed online setting, where the scheduler receives the information that no more jobs are coming when the last job is released.

\begin{theorem}\label{Thm:FullBatch}
	There exists no $\alpha\geq1$ such that the Full-Batch algorithm is an $\alpha$-approximation for minimizing the makespan in a PFFB.
\end{theorem}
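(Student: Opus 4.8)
The goal is to produce, for every constant $\alpha \ge 1$, a PFFB instance on which the makespan of the Full-Batch schedule exceeds $\alpha$ times the optimal makespan; since $\alpha$ is arbitrary, this rules out any approximation guarantee. The plan is to build such instances from a parameterised family of ``bad gadgets'' and then take a limit.

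The phenomenon to exploit is the one way in which Full-Batch is strictly worse than an optimal schedule: at a stage $S_i$ it must leave a machine idle whenever fewer than $b_i$ jobs are present there and more jobs are still on their way to $S_i$, because the batch it is about to start has to be full; an optimal schedule may instead start a smaller batch immediately and deal with the straggler in a later batch. A single stage already produces loss of this kind, but only a bounded factor, so the instances I would construct use a growing number of stages (or stages whose batch capacities and processing times grow), arranged so that this forced idling occurs at many stages and accumulates, while one explicitly chosen feasible schedule --- which may under-fill its batches at every non-bottleneck stage, and hence never idles a machine --- clears all jobs much sooner. One idea for making the idling non-absorbable is to put a small batch capacity at one stage followed by a much larger one at the next, so that a full batch there cannot begin until the previous stage has essentially finished (destroying pipelining for Full-Batch but not for the comparison schedule, which splits that batch), together with release dates that stretch the idle interval at the first stage.

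The write-up then consists of: (i) describing the instance $I_k$ as a function of a parameter $k$; (ii) tracing the Full-Batch schedule, which is entirely determined --- permutation order $J_1,\dots,J_n$; all batches full except the last at each stage; every batch begun as soon as its jobs are available and a machine is free --- and reading off a lower bound $\Omega(f(k))$ on its makespan; (iii) exhibiting one concrete feasible permutation schedule (permissible by Theorem~\ref{Thm:PermI}) of makespan $O(g(k))$ with $f(k)/g(k)\to\infty$; (iv) letting $k\to\infty$. If it is convenient to work first with a single-stage bad instance, it can be padded out to arbitrarily many stages by inserting stages with negligible processing time, again invoking Theorem~\ref{Thm:PermI}.

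The main obstacle is steps (ii) and (iii) together: one must design the release dates and capacities so that the delay Full-Batch accumulates is genuinely unbounded relative to the optimum --- that is, so that it is neither a one-off pipeline-fill cost nor a throughput bottleneck that an optimal schedule must pay as well --- and then verify that the comparison schedule really is both feasible and faster by the promised margin.
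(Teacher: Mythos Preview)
Your outline captures the right mechanism --- alternating a stage of small batch capacity with a stage of large batch capacity so that Full-Batch must wait for the slow stage to drain before it can form a full batch at the next one --- but as written it is a plan, not a proof: the instance $I_k$ is never actually specified, and you yourself flag steps (ii) and (iii) as ``the main obstacle.'' The paper's construction shows that once you commit to the right gadget the calculation is short. With one machine per stage and no release dates at all, take $s=10\alpha$ stages and $n=5\alpha$ jobs, with odd stages having $p_{2i-1}=2$, $b_{2i-1}=n$ and even stages having $p_{2i}=b_{2i}=1$. Full-Batch must wait at every odd stage for all $n$ jobs to trickle through the preceding unit-capacity stage, giving makespan $10\alpha+25\alpha^2$; the singleton-batch schedule pipelines perfectly and finishes in $25\alpha-2$. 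The ratio exceeds $\alpha$. So release dates are unnecessary, and the ``growing number of stages'' option you mention is the one that works.

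One piece of your plan would fail outright: the fallback of starting from a single-stage bad instance and padding with negligible stages. On a single batching machine, Full-Batch's makespan is at most the optimal makespan plus one processing time (it starts a full batch no later than the arrival of its last job, and the final partial batch starts as soon as its last job arrives), so the ratio there is bounded by~$2$; padding with stages of negligible processing time preserves both makespans up to $o(1)$ and cannot push the ratio past any fixed constant. The unbounded loss really does require many \emph{substantial} stages whose capacities alternate, exactly as in your ``small then large'' remark --- that remark is the proof, once you write the instance down.
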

	\begin{proof}
		The statement already holds for usual PFBs, i.e. PFFBs where at each stage there is only one machine. Therefore, in the following, we assume $m_i=1$ for all stages $S_i$, $i\in[s]$. For ease of notation, we identify each stage $S_i$ with its single associated machine $M_i$.
		
		Without loss of generality let $\alpha\geq1$ be integer. Construct a PFB instance without release dates as follows. Set $s=10\alpha$, $n=5\alpha$, $p_{2i}=b_{2i}=1$, $p_{2i-1}=2$ and $b_{2i-1}=n$ for $i\in[5\alpha]$.
		
		Let $\varsigma$ be the schedule produced by the Full-Batch algorithm. Fig.~\ref{Fig:FullBatch} illustrates $\varsigma$ on the first four machines for the case $n=5$. An odd machine waits until all jobs have been finished on the previous even machine, before it processes all of them in a single batch. This way, we obtain $C_{\max}(\varsigma)=10\alpha + 5\alpha n = 10\alpha + 25\alpha^2$, where the first term stems from the $5\alpha$ odd machines and the second term from the $5\alpha$ even machines.
		
		\begin{figure}[ht]
			\centering
			\begin{tikzpicture}
			\newcommand{\step}{0.5}
			\grid{13}{5}{\step}{north}
			\draw [fill=mygray] (0*\step, 0*\step) rectangle (2*\step, 5*\step) node [pos=0.5]{$M_1$};
			\draw [fill=mygray] (2*\step, 4*\step) rectangle (3*\step, 5*\step) node [pos=0.5]{$M_2$};
			\draw [fill=mygray] (3*\step, 3*\step) rectangle (4*\step, 4*\step) node [pos=0.5]{$M_2$};
			\draw [fill=mygray] (4*\step, 2*\step) rectangle (5*\step, 3*\step) node [pos=0.5]{$M_2$};
			\draw [fill=mygray] (5*\step, 1*\step) rectangle (6*\step, 2*\step) node [pos=0.5]{$M_2$};
			\draw [fill=mygray] (6*\step, 0*\step) rectangle (7*\step, 1*\step) node [pos=0.5]{$M_2$};
			\draw [fill=mygray] (7*\step, 0*\step) rectangle (9*\step, 5*\step) node [pos=0.5]{$M_3$};
			\draw [fill=mygray] (9*\step, 4*\step) rectangle (10*\step, 5*\step) node [pos=0.5]{$M_4$};
			\draw [fill=mygray] (10*\step, 3*\step) rectangle (11*\step, 4*\step) node [pos=0.5]{$M_4$};
			\draw [fill=mygray] (11*\step, 2*\step) rectangle (12*\step, 3*\step) node [pos=0.5]{$M_4$};
			\draw [fill=mygray] (12*\step, 1*\step) rectangle (13*\step, 2*\step) node [pos=0.5]{$M_4$};
			\draw [fill=mygray] (13*\step, 0*\step) rectangle (14*\step, 1*\step) node [pos=0.5]{$M_4$};
			\node [] at (14.5*\step, 2.5*\step) {$\mathbf{\cdots}$};
			\end{tikzpicture}
			\caption{Schedule $\varsigma$ up to stage $S_4$ produced by the Full-Batch algorithm for the case $n=5$.}
			\label{Fig:FullBatch}
		\end{figure}
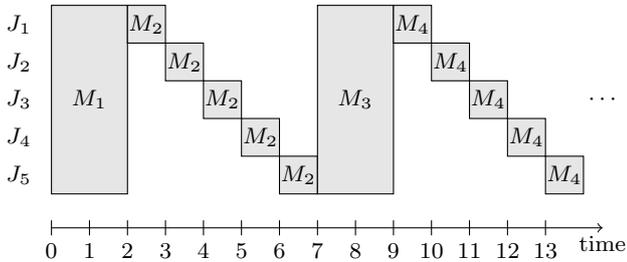
		
		In contrast, let $\varsigma'$ be the schedule where all batches consist of a single job only. On the first machine, each job is started two time steps after the previous job, i.e., $J_j$ is started at time $2j-2$. On all remaining machines, jobs can be started immediately upon arrival, since no processing time is larger than two. Fig.~\ref{Fig:OneByOne} illustrates $\varsigma'$ on the first four machines for the case $n=5$. We obtain \begin{align*}C_{\max}(\varsigma')=2n - 2 +15\alpha=25\alpha-2\end{align*} because it takes $2n-2$ time steps until the last job is started on the first machine, and $15\alpha$ more time steps for processing the last job on all machines.
		
		\begin{figure}[ht]
			\centering
			\begin{tikzpicture}
			\newcommand{\step}{0.5}
			\grid{13}{5}{\step}{north}
			
			
			\draw [fill=mygray] (0*\step, 4*\step) rectangle (2*\step, 5*\step) node [pos=0.5]{$M_1$};
			\draw [fill=mygray] (2*\step, 3*\step) rectangle (4*\step, 4*\step) node [pos=0.5]{$M_1$};
			\draw [fill=mygray] (4*\step, 2*\step) rectangle (6*\step, 3*\step) node [pos=0.5]{$M_1$};
			\draw [fill=mygray] (6*\step, 1*\step) rectangle (8*\step, 2*\step) node [pos=0.5]{$M_1$};
			\draw [fill=mygray] (8*\step, 0*\step) rectangle (10*\step, 1*\step) node [pos=0.5]{$M_1$};
			\draw [fill=mygray] (2*\step, 4*\step) rectangle (3*\step, 5*\step) node [pos=0.5]{$M_2$};
			\draw [fill=mygray] (4*\step, 3*\step) rectangle (5*\step, 4*\step) node [pos=0.5]{$M_2$};;
			\draw [fill=mygray] (6*\step, 2*\step) rectangle (7*\step, 3*\step) node [pos=0.5]{$M_2$};;
			\draw [fill=mygray] (8*\step, 1*\step) rectangle (9*\step, 2*\step) node [pos=0.5]{$M_2$};;
			\draw [fill=mygray] (10*\step, 0*\step) rectangle (11*\step, 1*\step) node [pos=0.5]{$M_2$};;
			\draw [fill=mygray] (3*\step, 4*\step) rectangle (5*\step, 5*\step) node [pos=0.5]{$M_3$};
			\draw [fill=mygray] (5*\step, 3*\step) rectangle (7*\step, 4*\step) node [pos=0.5]{$M_3$};
			\draw [fill=mygray] (7*\step, 2*\step) rectangle (9*\step, 3*\step) node [pos=0.5]{$M_3$};
			\draw [fill=mygray] (9*\step, 1*\step) rectangle (11*\step, 2*\step) node [pos=0.5]{$M_3$};
			\draw [fill=mygray] (11*\step, 0*\step) rectangle (13*\step, 1*\step) node [pos=0.5]{$M_3$};
			\draw [fill=mygray] (5*\step, 4*\step) rectangle (6*\step, 5*\step) node [pos=0.5]{$M_4$};
			\draw [fill=mygray] (7*\step, 3*\step) rectangle (8*\step, 4*\step) node [pos=0.5]{$M_4$};
			\draw [fill=mygray] (9*\step, 2*\step) rectangle (10*\step, 3*\step) node [pos=0.5]{$M_4$};
			\draw [fill=mygray] (11*\step, 1*\step) rectangle (12*\step, 2*\step) node [pos=0.5]{$M_4$};
			\draw [fill=mygray] (13*\step, 0*\step) rectangle (14*\step, 1*\step) node [pos=0.5]{$M_4$};
			\node [] at (14.5*\step, 2.5*\step) {$\mathbf{\cdots}$};
			\end{tikzpicture}
			\caption{Schedule $\varsigma'$ up to stage $S_4$ produced by using only batches of size one for the case $n=5$.}
			\label{Fig:OneByOne}
		\end{figure}
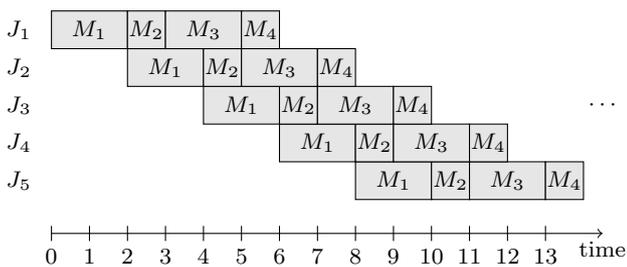
		
		Hence, in total, we obtain
		\begin{align*}
		\frac{C_{\max}(\varsigma)}{C_{\max}(\varsigma')}=
		\frac{10\alpha + 25\alpha^2}{25\alpha-2}>
		\frac{25\alpha^2}{25\alpha}=\alpha.
		\end{align*}
		
		Therefore, the Full-Batch algorithm cannot be an $\alpha$-approximation for any constant $\alpha\geq1$.\qed
	\end{proof}

	\begin{remark}
		Note that \citet{Hertrich2018} shows for the case of PFBs that the example in the proof of Theorem \ref{Thm:FullBatch} is no longer valid if either the number of jobs $n$ or the number of stages $s$ is fixed and no longer depends on $\alpha$. In these cases, the Full-Batch algorithm becomes a constant factor approximation. Analyzing the proofs of \citet{Hertrich2018}, one can see that these results carry over to PFFBs.
	\end{remark}

	\section{Optimal Online Algorithm for Two Stages}\label{Sec:PhiForTwo}
	
	We have seen that the competitive ratio of the Never-Wait algorithm is $2$ with respect to all four objective functions considered in this paper. Comparing with the lower bounds of Section~\ref{Sec:LowerBoundCmp}, this is best possible for the total flow time objective. However, for the other three objectives, there is a gap between the lower bound of the golden ratio $\varphi$ and the upper bound of $2$. In this section, we close this gap in the special case of $s\leq 2$ for makespan and total completion time by presenting a specialized $\varphi$-competitive algorithm for this case. This extends the result of \citet{zhang2003optimal}, who provide a $\varphi$-competitive algorithm for makespan minimization with $s=1$, i.e., on identical, parallel batching machines.
	
	Let $t=\varphi p_1+ (\varphi-1)p_2$. This is the latest possible time at which the first batch must be started at the second stage if we want that a job released at time zero is completed at time $\varphi(p_1+p_2)$, which is the minimal completion time of such a job multiplied with $\varphi$. The idea of the following algorithm is to schedule $S_1$ in a way such that as many jobs as possible have completed $S_1$ at time $t$, while the machines of $S_2$ stay idle until time $t$ and are scheduled according to the Never-Wait algorithm afterwards.
	
	\begin{definition}
		For a 2-stage PFFB the \emph{$t$-Switch} algorithm is defined as follows. Let the set $I$ of \emph{starting instants} consist of those points in time $\tau \geq 0$, for which $\tau + \ell p_1 = t$ for some integer $\ell \in \mathbb{Z}$, i.e.,		
		\begin{align*}
		I = \biggl\{ t-\left\lfloor\frac{t}{p_1}\right\rfloor p_1,\ & t-\left(\left\lfloor\frac{t}{p_1}\right\rfloor-1\right) p_1,\dots, \\
		&  t-p_1, t, t+p_1,\dots\biggr\}.
		\end{align*}
		At stage $S_1$, jobs are started only at starting instants. At each starting instant, as many jobs are started as possible, i.e., the minimum of $m_1b_1$ and the number of available jobs. The machines of $S_2$ stay idle until time $t$ and are scheduled according to the Never-Wait algorithm afterwards.
	\end{definition}
	
	Next, we prove three lemmas that help to show \mbox{$\varphi$-competitiveness}. In the following, let $c_{1j}$ and $c_{2j}$, $j\in[n]$, be the completion times produced by the $t$-Switch algorithm and let $c^*_{1j}$, $c^*_{2j}$, $j\in[n]$, be the lower bounds of Section \ref{Sec:LowerBound}.
	
	\begin{lemma}\label{Lem:FirstMachine}
		For all $j\in[n]$, it holds that $c_{1j}\leq c^*_{1j} + p_1$.
	\end{lemma}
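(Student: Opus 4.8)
The plan is to compare the completion times $c_{1j}$ produced by the $t$-Switch algorithm at the first stage with the lower bound values $c^*_{1j}$ directly, using the structure of the set $I$ of starting instants. First I would recall that, by the definition of the recursion \eqref{Equ:LowerBound} with $r_j \ge 0$, we have $c^*_{1j} \ge \lceil j/(m_1b_1)\rceil p_1$ (this is exactly the inequality $\mathrm{LB}_{1,j}$ established in Remark~\ref{Rem:asymptotic1approx}), and more precisely $c^*_{1j} = \max\{r_j, c^*_{1(j-m_1b_1)}\} + p_1$. The key observation about the $t$-Switch algorithm is that at stage $S_1$ the machines process full groups of $m_1b_1$ jobs at each starting instant (except possibly the last nonempty one), and a group is started at the earliest starting instant $\tau \in I$ that is at or after the moment when both (i) all $m_1b_1$ jobs of the group are released and (ii) the previous group has finished, i.e.\ $\tau$ is the smallest element of $I$ with $\tau \ge \max\{r_j, c_{1(j-m_1b_1)} - p_1\}$ for the job $J_j$ that is the last in its group.

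The main step is then an induction on $j$ (in blocks of size $m_1b_1$) showing $c_{1j} \le c^*_{1j} + p_1$. For the induction step, let $\tau \in I$ be the starting instant at which $J_j$ is processed; then $c_{1j} = \tau + p_1$, and by the choice of $\tau$ as the \emph{smallest} such starting instant we have $\tau - p_1 < \max\{r_j, c_{1(j-m_1b_1)} - p_1\}$ — unless $\tau$ is already the first starting instant $t - \lfloor t/p_1\rfloor p_1 \in [0, p_1)$, which I would handle as a base case. Consequently $\tau < \max\{r_j + p_1,\, c_{1(j-m_1b_1)}\} \le \max\{r_j + p_1,\, c^*_{1(j-m_1b_1)} + p_1\}$ by the induction hypothesis, which gives $\tau < \max\{r_j,\, c^*_{1(j-m_1b_1)}\} + p_1 = c^*_{1j}$, hence $c_{1j} = \tau + p_1 \le c^*_{1j} + p_1$ (using that $c_{1j}$ and $c^*_{1j}$ lie on the same grid of spacing $p_1$ offset by $t$, so a strict inequality on the grid improves to $\le c^*_{1j}$ after adding $p_1$; alternatively one simply notes $\tau \le c^*_{1j} - p_1 + p_1 = c^*_{1j}$ cannot be sharpened further and $\tau + p_1 \le c^*_{1j} + p_1$ suffices). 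For the base case, where $J_j$ is in the first group or is started at the initial instant in $[0,p_1)$, we have $c_{1j} \le p_1 + \max\{r_j\} \le c^*_{1j} + p_1$ trivially since $c^*_{1j} \ge r_j + p_1 \ge p_1$, wait — more carefully, $c_{1j} = \tau + p_1 < p_1 + p_1 \le$ is not automatic; but since any job must wait at least until its release and the first group's starting instant is at most $\max_j r_j + p_1$ rounded up to $I$, which is within $p_1$ of $c^*_{1j}$.

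The hard part will be the bookkeeping at the boundary: making sure the ``round up to the next element of $I$'' operation only ever costs at most one extra $p_1$ relative to the unconstrained earliest start used in the definition of $c^*_{1j}$, and correctly treating the very first starting instant, which is forced to lie in $[0, p_1)$ rather than being freely chosen. I expect the cleanest way to organize this is: define $e_j$ to be the earliest time $J_j$ could complete $S_1$ if starts were unconstrained (so $e_j = c^*_{1j}$), observe $c_{1j} \le e_j + p_1$ because rounding a start time up to the next grid point of $I$ adds less than $p_1$, and then note that this already yields the claim — the grid of $I$ has spacing exactly $p_1$, so the penalty for using only starting instants is strictly less than $p_1$, and combined with the induction that propagates the bound through successive groups (each group inherits at most the $+p_1$ slack of its predecessor, not an accumulating slack, because the recursion \eqref{Equ:LowerBound} for $c^*_{1j}$ takes a $\max$ with $c^*_{1(j-m_1b_1)}$ and the actual schedule takes the analogous $\max$), the single $+p_1$ slack never compounds.
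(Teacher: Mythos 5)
There is a genuine gap: your proof rests on a mischaracterization of what the $t$-Switch algorithm does at stage $S_1$. You assume that jobs are partitioned into fixed groups of $m_1b_1$ consecutive jobs and that a group is only started at the first starting instant after \emph{all} $m_1b_1$ of its jobs are released. The actual rule is greedy: at every starting instant the algorithm starts $\min\{m_1b_1,\ \text{number of available jobs}\}$ jobs, never waiting for a full complement. This matters twice. First, for the rule you describe the lemma is in fact false for jobs that are not last in their group (a job released at time $0$ whose group-mates arrive much later would be delayed arbitrarily beyond $c^*_{1j}+p_1$), and your induction, which is phrased only for ``the job $J_j$ that is the last in its group,'' never addresses those jobs. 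Second, the inequality your induction step really needs, namely $c_{1(j-m_1b_1)}\geq c_{1j}-p_1$ whenever $J_j$ is delayed past the first starting instant after $r_j$, is never derived from the algorithm's actual behavior; you instead read it off from your (incorrect) start-rule formula $\tau\geq\max\{r_j,\,c_{1(j-m_1b_1)}-p_1\}$, which moreover garbles the machine-availability constraint (the previous batch must \emph{finish}, not merely start, before its machines are reused). Your base case is also left unresolved, as you note yourself mid-argument.

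The repair is exactly the missing greediness argument, and it is how the paper proceeds: induct on $j$ and distinguish two cases for the starting instant $\tau=c_{1j}-p_1$. If $r_j>\tau-p_1$ (always the case when $j\leq m_1b_1$), then $c_{1j}\leq r_j+2p_1\leq c^*_{1j}+p_1$ because $c^*_{1j}\geq r_j+p_1$. Otherwise $J_j$ was available at the previous instant $\tau-p_1$ but not started, so by greediness and the fixed permutation order exactly $m_1b_1$ lower-indexed jobs were started there; hence $j>m_1b_1$ and $c_{1(j-m_1b_1)}\geq c_{1j}-p_1$, and the induction hypothesis together with \eqref{Equ:LowerBound} gives $c_{1j}\leq c_{1(j-m_1b_1)}+p_1\leq c^*_{1(j-m_1b_1)}+2p_1\leq c^*_{1j}+p_1$. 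Your overall framework (induction in steps of $m_1b_1$, the grid of spacing $p_1$, the observation that the $+p_1$ slack does not compound because both the schedule and \eqref{Equ:LowerBound} take a maximum) is the right skeleton, but without the case distinction ``released after the previous instant'' versus ``crowded out by a full batch of earlier jobs'' the key step is unsupported.
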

		\begin{proof}
			We use induction on $j$.
			Fix a job index $j\in[n]$. Note that $J_j$ is started at the starting instant $\tau= c_{1j}-p_1$. Let $\tau^\prime=c_{1j}-2p_1$ be the previous starting instant. First, suppose that $r_j > \tau^\prime=c_{1j}-2p_1$, i.e., $\tau$ is the first starting instant after $r_j$. Note, in particular, that since $m_1b_1$ jobs can be started at each starting instant, this is always the case if $j \leq m_1b_1$. 
			In this case, we directly obtain $c_{1j}\leq r_j+2p_1\leq c^*_{1j}+p_1$.
						
			Otherwise, if $r_j\leq \tau^\prime=c_{1j}-2p_1$, then $J_j$ was already released at starting instant $\tau^\prime$, but has not been started at this time. Hence, $j > m_1b_1$ and exactly $m_1b_1$ other jobs must have been started at $\tau^\prime$. This implies $c_{1(j-m_1b_1)}\geq c_{1j}-p_1$ and, hence, 
			\begin{align*}
				c_{1j}&\leq c_{1(j-m_1b_1)}+p_1
				\\&\stackrel{\text{ind.}}{\leq} c^*_{1(j-m_1b_1)} + 2p_1
				\\&\stackrel{\text{(\ref{Equ:LowerBound})}}{\leq} c^*_{1j} + p_1 \tag*{\qed}
			\end{align*}
		\end{proof}
	
	\begin{lemma}\label{Lem:PhiEarly}
		For all $j\in[n]$ with $c_{1j}\leq t$, it holds that $c_{2j} \leq c^*_{2j} + (\varphi - 1)(p_1 + p_2)$.
	\end{lemma}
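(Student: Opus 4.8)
The plan is to estimate $c_{2j}$ directly using the structure of the $t$-Switch algorithm at the second stage, which behaves like Never-Wait but only after the fixed time $t$. Fix $j\in[n]$ with $c_{1j}\leq t$. Since the machines of $S_2$ are kept idle until time $t$ and follow Never-Wait afterwards, job $J_j$ is available at $S_2$ by time $t$, so the relevant ``starting clock'' at $S_2$ for $J_j$ begins at $t$ rather than at $c_{1j}$. I would first handle the case where $J_j$ does not wait long at $S_2$: if $c_{2j}\leq t + 2p_2$, then since $t=\varphi p_1+(\varphi-1)p_2$ we get $c_{2j}\leq \varphi p_1 + (\varphi-1)p_2 + 2p_2 = \varphi p_1 + (\varphi+1)p_2 = \varphi(p_1+p_2)$ (using $\varphi+1=\varphi^2$ is not needed; just $(\varphi-1)+2=\varphi+1$), and then compare against $c^*_{2j}\geq p_1+p_2$ (from \eqref{Equ:LowerBound} with $r_j\geq 0$) to conclude $c_{2j}\leq c^*_{2j} + \varphi(p_1+p_2) - (p_1+p_2) = c^*_{2j} + (\varphi-1)(p_1+p_2)$, as desired.

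In the remaining case $c_{2j} > t + 2p_2$, I would argue exactly as in Lemma~\ref{lem:NeverWaitGreedy}: during the interval $[c_{2j}-2p_2, c_{2j}-p_2[$, job $J_j$ is available at $S_2$ (it completed $S_1$ by $t \leq c_{2j}-2p_2$) but not started, so all $m_2$ machines run full batches in that interval, forcing $j > m_2b_2$ and $c_{2(j-m_2b_2)} \geq c_{2j}-p_2$. Now I would like to induct: applying the bound for index $j-m_2b_2$ — but here one must be careful, because $J_{j-m_2b_2}$ need not satisfy $c_{1(j-m_2b_2)}\leq t$ is automatic (it does, since $c_{1(j-m_2b_2)}\leq c_{1j}\leq t$ by the permutation order), so the inductive hypothesis of this very lemma applies to it. Then $c_{2j}\leq c_{2(j-m_2b_2)} + p_2 \leq c^*_{2(j-m_2b_2)} + (\varphi-1)(p_1+p_2) + p_2$, and since $c^*_{2(j-m_2b_2)} + p_2 \leq c^*_{2j}$ by \eqref{Equ:LowerBound}, the claim follows. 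So the overall structure is an induction on $j$ with a Never-Wait-style case split, the only twist being that the ``no-wait'' threshold is measured from $t$ instead of from $c_{1j}$.

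The main obstacle I anticipate is bookkeeping the role of the artificial idle period up to time $t$: I must make sure that in the case analysis the quantity playing the role of ``$c_{(i-1)j}$'' in Lemma~\ref{lem:NeverWaitGreedy} is really $\max\{c_{1j}, t\} = t$ under the hypothesis $c_{1j}\leq t$, and that the Never-Wait greedy argument (full batches on every machine over a length-$p_2$ window during which $J_j$ waits) still goes through verbatim once we are past time $t$ — it does, because after $t$ the algorithm is literally Never-Wait and $J_j$ is already available. The constant then comes out cleanly from the definition $t=\varphi p_1+(\varphi-1)p_2$ together with $c^*_{2j}\geq p_1+p_2$; the slightly delicate point is checking the base case $j\leq m_2 b_2$, where $J_j$ is in the first batch at $S_2$, which starts no later than $\max\{c_{1j}, t\} = t$ (as at most $m_2 b_2$ jobs have completed $S_1$ by then and all fit), giving $c_{2j}\leq t + p_2 \leq t + 2p_2$ and landing us in the easy case.
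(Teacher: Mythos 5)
Your overall strategy---induction on $j$, a Never-Wait-style full-batch argument when $J_j$ waits, and the definition $t=\varphi p_1+(\varphi-1)p_2$ combined with $c^*_{2j}\geq p_1+p_2$ in the easy case---is essentially the paper's. However, your easy case is broken as written. You take the threshold $c_{2j}\leq t+2p_2$ and claim $t+2p_2=\varphi p_1+(\varphi+1)p_2=\varphi(p_1+p_2)$; the last equality is false, since $\varphi p_1+(\varphi+1)p_2=\varphi(p_1+p_2)+p_2$. With the corrected value your chain only yields $c_{2j}\leq c^*_{2j}+(\varphi-1)(p_1+p_2)+p_2$, which misses the claimed bound by $p_2$. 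The subcase where this matters is $c_{2j}=t+2p_2$: because $c_{1j}\leq t$ and after time $t$ stage $S_2$ runs synchronized Never-Wait rounds, $J_j$ can only be started at a time in $\{t,\,t+p_2,\,t+2p_2,\dots\}$, and for a start at $t+p_2$ the crude bound $c^*_{2j}\geq p_1+p_2$ is not enough.

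The repair is exactly the paper's case split: either $J_j$ is started at $S_2$ at time $t$ itself, in which case $c_{2j}=t+p_2=\varphi(p_1+p_2)=(p_1+p_2)+(\varphi-1)(p_1+p_2)\leq c^*_{2j}+(\varphi-1)(p_1+p_2)$; or $J_j$ is started strictly later than $t$, in which case all $m_2$ machines process full batches continuously from $t$ until $c_{2j}-p_2$, so $j>m_2b_2$ and $c_{2(j-m_2b_2)}\geq c_{2j}-p_2$, and the induction together with \eqref{Equ:LowerBound} finishes as in your second case. (Equivalently, you could rescue your easy case at $c_{2j}=t+2p_2$ by noting that full batches were launched at time $t$, hence $j>m_2b_2$ and one application of \eqref{Equ:LowerBound} gives $c^*_{2j}\geq p_1+2p_2$---but that is the induction step in disguise, so the cleaner route is to move the threshold.) The rest of your argument is sound and matches the paper: the window argument in the waiting case is valid because the window lies entirely after $t$, the observation that $c_{1(j-m_2b_2)}\leq c_{1j}\leq t$ ensures the inductive hypothesis applies (a point the paper leaves implicit), and the base case $j\leq m_2b_2$ indeed gives a start at time $t$.
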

		\begin{proof}
			We use induction on $j$. Suppose first that $J_j$ is started at stage $S_2$ exactly at time $t$. Note that this is always the case if $j \leq m_2b_2$, as stage $S_2$ is idle before time $t$ and thus up to $m_2b_2$ jobs can be started at time $t$ (recall that it is assumed that $J_j$ is finished at $S_1$ before time $t$). We obtain 
			\begin{align*}
			c_{2j}&=t+p_2 = \varphi(p_1+p_2)\\&=(p_1+p_2) + (\varphi - 1)(p_1+p_2)\\&\leq c^*_{2j} + (\varphi - 1)(p_1+p_2).
			\end{align*}
			
			On the other hand, consider the case in which $J_j$ is started at stage $S_2$ later than time $t$. This can only happen if all machines of $S_2$ continuously process full batches between time $t$ and time $c_{2j}-p_2$. In particular, this implies $j > m_2b_2$ and $c_{2(j-m_2b_2)}\geq c_{2j}-p_2$. Hence,
			\begin{align*}
			c_{2j}&\leq c_{2(j-m_2b_2)}+p_2 \\&\stackrel{\text{ind.}}{\leq} c^*_{2(j-m_2b_2)}+p_2 + (\varphi - 1)(p_1 + p_2)\\&\stackrel{\text{(\ref{Equ:LowerBound})}}{\leq} c^*_{2j} + (\varphi - 1)(p_1+p_2).\tag*{\qed}
			\end{align*}
		\end{proof}
	
	\begin{lemma}\label{Lem:PhiLate}
		For all $j\in[n]$, it holds $c_{2j}< c^*_{2j} + p_1 + p_2$.
	\end{lemma}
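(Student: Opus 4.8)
The goal is the bound $c_{2j} < c^*_{2j} + p_1 + p_2$ for every job. The plan is to argue by induction on $j$, distinguishing according to whether or not $J_j$ has to wait at stage $S_2$. Recall that stage $S_2$ runs the Never-Wait algorithm starting from time $t$, and Lemma \ref{Lem:FirstMachine} controls the completion times at stage $S_1$.

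First I would consider the case where $J_j$ starts at stage $S_2$ essentially as soon as it is available, i.e. it is not delayed by batches of earlier jobs occupying all $m_2$ machines. In this case $J_j$ starts at $\max\{c_{1j}, t\}$ and so $c_{2j} \le \max\{c_{1j}, t\} + p_2$. If $c_{1j} \ge t$, then using Lemma \ref{Lem:FirstMachine} and \eqref{Equ:ProcTime}, $c_{2j} \le c_{1j} + p_2 \le c^*_{1j} + p_1 + p_2 \le c^*_{2j} + p_1$, which is even stronger than needed. If instead $c_{1j} < t$ (equivalently $c_{1j} \le t$, since $c_{1j}$ is not larger), then Lemma \ref{Lem:PhiEarly} already gives $c_{2j} \le c^*_{2j} + (\varphi-1)(p_1+p_2) < c^*_{2j} + p_1 + p_2$, because $\varphi - 1 < 1$; so this subcase reduces to the earlier lemma. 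The second case is where $J_j$ is delayed at $S_2$: then all $m_2$ machines process full batches during $[c_{2j}-p_2, c_{2j}[$ (as in the proof of Lemma \ref{Lem:PhiEarly}), so $j > m_2 b_2$ and $c_{2(j-m_2b_2)} \ge c_{2j} - p_2$. Applying the induction hypothesis to $J_{j-m_2b_2}$ and then \eqref{Equ:LowerBound} yields
\begin{align*}
c_{2j} &\le c_{2(j-m_2b_2)} + p_2 \\
&\stackrel{\text{ind.}}{<} c^*_{2(j-m_2b_2)} + p_1 + 2p_2 \\
&\stackrel{\eqref{Equ:LowerBound}}{\le} c^*_{2j} + p_1 + p_2.
\end{align*}

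The induction base is covered uniformly: if $j \le m_2 b_2$, then at most $m_2 b_2$ jobs precede $J_j$ in the permutation, so $J_j$ is never blocked at $S_2$ and falls into the first case above. The main subtlety to get right is the bookkeeping in the first case — specifically confirming that when $J_j$ is not delayed it genuinely starts at $\max\{c_{1j}, t\}$ under the $t$-Switch rule (stage $S_2$ is idle before $t$, and from $t$ onward Never-Wait starts a batch whenever a machine is free and a job is waiting), and that the subcase $c_{1j} \le t$ is exactly the hypothesis of Lemma \ref{Lem:PhiEarly}. I do not expect any real obstacle here; the strict inequality in the statement is comfortably absorbed by the slack $\varphi - 1 < 1$ and by the strict inequality propagating through the induction.
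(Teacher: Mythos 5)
There is a genuine gap in your case analysis, located exactly where the regime $c_{1j}>t$ differs from the regime handled by Lemma \ref{Lem:PhiEarly}. Your dichotomy is ``$J_j$ starts at $\max\{c_{1j},t\}$ with no wait'' versus ``$J_j$ is delayed, hence all $m_2$ machines complete full batches in $[c_{2j}-p_2,c_{2j}[$, so $j>m_2b_2$ and $c_{2(j-m_2b_2)}\geq c_{2j}-p_2$.'' The second implication is only valid when every batch blocking $J_j$ was started at a time when $J_j$ was already available at $S_2$ (this is why it works inside Lemma \ref{Lem:PhiEarly}: there $c_{1j}\leq t$ and $S_2$ is idle before $t$, so no batch predates $J_j$'s availability). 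When $c_{1j}>t$, a machine of $S_2$ may be busy with a \emph{non-full} batch that was started before $J_j$ arrived, and such a batch can delay $J_j$ by up to $p_2$ without any full batch existing. Concretely, take $m_2=1$, $b_2=2$, and two jobs released after $t$ so that $J_1$ is started alone at $S_2$ shortly before $c_{12}$: then $J_2$ is delayed, yet the blocking batch is a singleton, $j=2\not>m_2b_2$, and $c_{2(j-m_2b_2)}$ does not even exist. The same example refutes your proposed induction base (``$j\leq m_2b_2$ implies $J_j$ is never blocked at $S_2$''), which again is only true before time $t$.

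The repair is the one the paper uses by invoking Theorem \ref{NeverWaitBound}: for $c_{1j}>t$ the correct split is by the \emph{length} of the wait, not by whether there is any wait. If $J_j$ waits at most $p_2$, i.e.\ $c_{2j}\leq c_{1j}+2p_2$, then Lemma \ref{Lem:FirstMachine} and \eqref{Equ:LowerBound} give $c_{2j}\leq c^*_{1j}+p_1+2p_2\leq c^*_{2j}+p_1+p_2$ (this case absorbs delays caused by pre-existing, possibly partial batches). If $J_j$ waits more than $p_2$, then during the interval $[c_{2j}-2p_2,c_{2j}-p_2[$ (which lies after $c_{1j}>t$, so Never-Wait is in force) the argument of Lemma \ref{lem:NeverWaitGreedy} applies, every machine starts a full batch there, and only then do you get $j>m_2b_2$ and $c_{2(j-m_2b_2)}\geq c_{2j}-p_2$, after which your inductive step goes through. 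Note also that with this corrected first subcase the inequality obtained is a priori only ``$\leq$'', so the strict inequality claimed in the lemma needs a small additional accounting (e.g.\ tracking the strict slack in Lemma \ref{Lem:FirstMachine}'s base case) rather than being ``comfortably absorbed'' as you suggest. The subcases you do treat correctly ($c_{1j}\leq t$ via Lemma \ref{Lem:PhiEarly}, and zero wait with $c_{1j}\geq t$) match the paper's intent; it is the intermediate situation of a short delay by pre-$c_{1j}$ batches that your proof misses.
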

		\begin{proof}
			If $c_{1j}\leq t$, then the claim for this index $j$ follows by Lemma \ref{Lem:PhiEarly}. If $c_{1j}> t$, then $S_2$ is already scheduled according to the Never-Wait algorithm when $J_j$ arrives. Hence, the claim can be proven analogously to Theorem~\ref{NeverWaitBound}, making use of Lemma \ref{Lem:FirstMachine}.\qed
		\end{proof}
	
	Now we are ready to prove $\varphi$-competitiveness of the $t$-Switch algorithm.
	
	\begin{theorem}\label{Thm:TwoStages}
		For a two-stage PFFB, the $t$-Switch algorithm is $\varphi$-competitive with respect to the two objective functions $C_{\max}$ and $\sum C_j$.
	\end{theorem}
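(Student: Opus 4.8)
The goal is to show $f(\varsigma)\leq\varphi\cdot f(\varsigma^*)$ for $f\in\{C_{\max},\sum C_j\}$, where $\varsigma$ is the $t$-Switch schedule and $\varsigma^*$ is an optimal schedule. By Theorem~\ref{Thm:PermI}, $\varsigma^*$ can be taken to be a permutation schedule with job order $J_1,\dots,J_n$, so Lemma~\ref{Lem:LowerBound} gives $C_j(\varsigma^*)\geq c^*_{2j}$ for every $j$. Hence it suffices to bound $C_j(\varsigma)=c_{2j}$ against $\varphi\,c^*_{2j}$ for each job individually; summing then handles $\sum C_j$ and taking the maximum handles $C_{\max}$. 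So the entire theorem reduces to the per-job estimate $c_{2j}\leq\varphi\,c^*_{2j}$.

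\textbf{Deriving the per-job bound.} I would split on whether $J_j$ is among the ``early'' jobs finishing stage $S_1$ by time $t$. If $c_{1j}>t$, apply Lemma~\ref{Lem:PhiLate} to get $c_{2j}<c^*_{2j}+p_1+p_2$; since $c^*_{1j}\geq p_1$ and so $c^*_{2j}\geq p_1+p_2$ (from \eqref{Equ:ProcTime} and $c^*_{1j}\geq p_1$), we get $c_{2j}<c^*_{2j}+c^*_{2j}\cdot\frac{p_1+p_2}{c^*_{2j}}\leq 2c^*_{2j}$ — but that is only the factor $2$. To sharpen this to $\varphi$ in the case $c_{1j}>t$, I would instead argue directly: $c_{1j}>t$ means $J_j$'s stage-$S_1$ completion is large, and by Lemma~\ref{Lem:FirstMachine}, $c^*_{1j}\geq c_{1j}-p_1>t-p_1=\varphi p_1+(\varphi-1)p_2-p_1=(\varphi-1)(p_1+p_2)$, hence $c^*_{2j}\geq c^*_{1j}+p_2>(\varphi-1)(p_1+p_2)+p_2=(\varphi-1)p_1+\varphi p_2\geq(\varphi-1)(p_1+p_2)$; combining with $c_{2j}<c^*_{2j}+(p_1+p_2)$ and $c^*_{2j}\geq(p_1+p_2)/(\varphi-1)\cdot$(appropriate factor) yields $c_{2j}/c^*_{2j}<1+(p_1+p_2)/c^*_{2j}\leq 1+\tfrac{1}{\varphi-1}\cdot$... using $\varphi-1=1/\varphi$, i.e. $1+(\varphi-1)=\varphi$ once $c^*_{2j}\geq(p_1+p_2)/(\varphi-1)=\varphi(p_1+p_2)$. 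This last inequality needs $c^*_{1j}\geq \varphi(p_1+p_2)-p_2=\varphi p_1+(\varphi-1)p_2=t$, which is exactly what $c_{1j}>t$ gives via Lemma~\ref{Lem:FirstMachine} only up to an additive $p_1$; so the delicate point is reconciling $c^*_{1j}\geq c_{1j}-p_1$ with the requirement $c^*_{1j}\geq t$. If $c_{1j}\leq t$, use Lemma~\ref{Lem:PhiEarly}: $c_{2j}\leq c^*_{2j}+(\varphi-1)(p_1+p_2)\leq c^*_{2j}+(\varphi-1)c^*_{2j}=\varphi c^*_{2j}$, using $c^*_{2j}\geq p_1+p_2$, which is immediate.

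\textbf{Assembling the two cases.} The clean, uniform bound is: in both cases $c_{2j}\leq c^*_{2j}+(\varphi-1)\max\{p_1+p_2,\,c^*_{1j}+p_2-(\text{something})\}$, but I expect the intended argument is simply that the ``additive slack'' $(\varphi-1)(p_1+p_2)$ from Lemma~\ref{Lem:PhiEarly} is never more than $(\varphi-1)c^*_{2j}$ because $c^*_{2j}\geq p_1+p_2$ always, and that in the late case one gets an even better multiplicative bound since $c^*_{2j}$ is then provably at least $\varphi(p_1+p_2)$ (so an additive $p_1+p_2$ is at most a factor $(\varphi-1)$, using $1/\varphi=\varphi-1$). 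Thus in all cases $c_{2j}\leq\varphi\,c^*_{2j}\leq\varphi\,C_j(\varsigma^*)$, and summing over $j$ or taking the max over $j$ completes both claims.

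\textbf{Main obstacle.} The delicate step is the late case $c_{1j}>t$: showing $c^*_{2j}$ is large enough that an additive $(p_1+p_2)$ becomes a factor $\varphi$. Lemma~\ref{Lem:FirstMachine} only gives $c^*_{1j}\geq c_{1j}-p_1$, so I need $c_{1j}$ to exceed $t$ by enough — in fact by at least $p_1$, which holds because stage-$S_1$ completion times under $t$-Switch occur only at starting instants spaced $p_1$ apart and $t$ is itself a starting instant, so $c_{1j}>t$ forces $c_{1j}\geq t+p_1$, hence $c^*_{1j}\geq t$ and $c^*_{2j}\geq t+p_2=\varphi(p_1+p_2)$. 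That discreteness observation about starting instants is the crux; once it is in hand the rest is the routine case split above.
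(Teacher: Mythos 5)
Your proposal is correct and takes essentially the same route as the paper: the same reduction to the per-job bound $c_{2j}\leq\varphi\,c^*_{2j}$ using Lemmas \ref{Lem:FirstMachine}, \ref{Lem:PhiEarly} and \ref{Lem:PhiLate}, with the factor $\varphi$ obtained from $c^*_{2j}\geq p_1+p_2$ in the early case and $c^*_{2j}\geq t+p_2=\varphi(p_1+p_2)$ in the late case. The only cosmetic difference is that you split on $c_{1j}\leq t$ versus $c_{1j}>t$ and invoke the $p_1$-spaced starting instants in the late case (to force $c_{1j}\geq t+p_1$ and hence $c^*_{1j}\geq t$), whereas the paper splits on $c^*_{1j}<t$ versus $c^*_{1j}\geq t$ and uses the same discreteness in the early case (to force $c_{1j}\leq t$); the two case distinctions are equivalent via Lemma \ref{Lem:FirstMachine}.
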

		\begin{proof}
			Consider a job $J_j$, $j\in[n]$. We distinguish two cases:
			
			\textbf{Case 1:} $c^*_{1j}<t$. Using Lemma \ref{Lem:FirstMachine}, it follows that $c_{1j}<t+p_1$. Since $c_{1j}$ must be a starting instant, we even obtain $c_{1j}\leq t$. Now Lemma \ref{Lem:PhiEarly} yields
			\begin{align*}
			c_{2j} \leq c^*_{2j} + (\varphi - 1)(p_1 + p_2) \leq c^*_{2j} + (\varphi - 1)c^*_{2j} = \varphi c^*_{2j}.
			\end{align*}
			
			\textbf{Case 2:} $c^*_{1j}\geq t$. Then it follows that $c^*_{2j}\geq t+p_2 = \varphi(p_1+p_2)$. Using Lemma \ref{Lem:PhiLate}, we obtain
			\begin{align*}
			c_{2j} \leq c^*_{2j} + p_1 + p_2 \leq c^*_{2j} + \frac{1}{\varphi}c^*_{2j} = \varphi c^*_{2j}.
			\end{align*}
			Having proven $c_{2j} \leq \varphi c^*_{2j}$ for all $j\in[n]$, the $\varphi$-com\-pet\-i\-tive\-ness follows for $C_{\max}$ and $\sum C_j$.\qed
		\end{proof}
	
	Theorem \ref{Thm:TwoStages} in combination with Theorems~\ref{Thm:LBs} and~\ref{Thm:sumCjLB} implies the following corollary.
	
	\begin{corollary}
		For PFFBs with $s=2$ stages and the makespan or total completion time objective, there is no deterministic online algorithm which, in general, has a better competitive ratio than the $t$-Switch algorithm.
	\end{corollary}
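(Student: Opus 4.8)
The plan is to combine the $\varphi$-competitiveness of the $t$-Switch algorithm, established in Theorem~\ref{Thm:TwoStages}, with the matching lower bounds for the two relevant objectives. Since the statement is an immediate consequence of results already proven in the excerpt, the proof is short and no real obstacle is expected; the only care needed is to cite the correct lower bound for each objective.

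First I would recall that for the makespan objective, Theorem~\ref{Thm:LBs} (due to \citet{zhang2003optimal}) already gives that no deterministic online algorithm for a PFFB can be $\alpha$-competitive for any $\alpha<\varphi$, and that this holds even with $s=1$ stages, hence in particular for $s=2$. For the total completion time objective, Theorem~\ref{Thm:sumCjLB} provides the same lower bound $\varphi$, again already for $s=1$ and therefore for $s=2$. Thus for both objectives, $\varphi$ is a lower bound on the best achievable competitive ratio.

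Next I would invoke Theorem~\ref{Thm:TwoStages}, which states that on a two-stage PFFB the $t$-Switch algorithm is $\varphi$-competitive with respect to $C_{\max}$ and $\sum C_j$. Combining this upper bound with the lower bounds above, the $t$-Switch algorithm attains the optimal competitive ratio $\varphi$ for both objectives, so no deterministic online algorithm can, in general, beat it. I would phrase the conclusion as: the statement follows immediately from Theorem~\ref{Thm:TwoStages} together with Theorems~\ref{Thm:LBs} and~\ref{Thm:sumCjLB}, and close with \qed.

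Concretely, the proof body would read essentially:

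\begin{proof}
	By Theorem~\ref{Thm:TwoStages}, the $t$-Switch algorithm is $\varphi$-competitive for $C_{\max}$ and $\sum C_j$ on two-stage PFFBs. On the other hand, Theorem~\ref{Thm:LBs} shows that no deterministic online algorithm for a PFFB can have competitive ratio less than $\varphi$ with respect to $C_{\max}$, even for $s=1$, and Theorem~\ref{Thm:sumCjLB} shows the same for $\sum C_j$. Since lower bounds for $s=1$ carry over to $s=2$ by introducing a further stage with negligible processing time, the competitive ratio $\varphi$ achieved by the $t$-Switch algorithm is best possible for both objectives.\qed
\end{proof}

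The only subtlety, and hence the "main obstacle" (a very minor one), is making sure that the lower bound instances, which are stated for $s=1$, are legitimately transferred to the $s=2$ setting; this is handled exactly as in the introductory remark of Section~\ref{Sec:LowerBoundCmp}, by padding with a stage whose processing time is negligible.
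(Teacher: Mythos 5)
Your proposal is correct and matches the paper's reasoning exactly: the paper justifies this corollary by combining the $\varphi$-competitiveness of the $t$-Switch algorithm from Theorem~\ref{Thm:TwoStages} with the lower bounds of Theorems~\ref{Thm:LBs} and~\ref{Thm:sumCjLB}, with the $s=1$ to $s\geq 2$ transfer already covered by the padding remark at the start of Section~\ref{Sec:LowerBoundCmp}. Nothing is missing.
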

	
	\section{Conclusion} \label{Sec:Conc}
	
	In this paper, we consider proportionate flexible flow shops with batching machines (PFFBs). We put a special focus on the online version of the problem, which is highly relevant for applications in the production of modern, individualized medicaments.
	To the best of our knowledge, the online version has not been studied before, not even in the special case of proportionate (non-flexible) flow shops with batching machines (PFBs).
	
	We describe and analyze two algorithms: the very general Never-Wait algorithm and the more specialized t-Switch algorithm. 
	The Never-Wait algorithm works for an arbitrary number of stages and machines. What is more, its description is relatively simple and therefore it is easy to implement in practice.
	We show that, despite its simplicity, the Never-Wait algorithm is 2-competitive for minimizing the makespan, total completion time, maximum flow time, and total flow time.
	Furthermore, we show that for the total flow time objective, no deterministic online algorithm can, in general, do better than the Never-Wait algorithm. 
	Note that the total flow time, which is equivalent with the average flow time by dividing by the constant number of jobs, is particularly important for our application: it measures the average time patients have to wait for their medicament after production is ordered. Obviously, a low average waiting time is necessary for patients to benefit from the medicine as quickly as possible.
	Interestingly, studying a particular industrial instance, \cite{ackermannGOR} have also done some initial work to confirm that the theoretical usefulness of the Never-Wait algorithm is also coherent with its practical behaviour.
	
	The t-Switch algorithm is specialized for PFFBs with only two stages and the makespan or total completion time objective. For these versions of the online problem, the t-Switch algorithm is a $\varphi$-competitive algorithm, with 
	$\varphi = \frac{1 + \sqrt{5}}{2}$, the golden ratio.
	By using and extending lower bounds known from the literature,
	we show that no deterministic online algorithm to minimize the makespan or total completion time in PFFBs with two stages can, in general, be better than the \mbox{t-Switch} algorithm.
	
	Both results are based on the observation that for all objectives we consider, there exists an optimal permutation schedule with start times of jobs ordered by a non-decreasing release date ordering on all stages. We show this as an extension to the theorem proved by \cite{PFBPaper1}. 

	Notice that for the offline version, our results imply that the Never-Wait algorithm is a 2-approximation algorithm for PFFBs to minimize the makespan, total completion time, maximum flow time or total flow time. This is interesting as so far, for PFFBs with arbitrarily many stages, no exact polynomial algorithm has been found even in the case where all stages consist of only one machine (see, e.g., \cite{PFBPaper1}).
	
	As this is the first study of online PFFBs, it is natural that some open questions remain. Most importantly, there remains a gap between the competitiveness of the Never-Wait algorithm and the lower bound of competitiveness in the case of three or more stages for makespan and total completion time, and even in the case of two stages, for the maximum flow time.
	Of course, it would be desirable to close this gap, either by proving a larger lower bound of competitiveness or by finding a better algorithm than the Never-Wait algorithm. As we have shown in Example \ref{Exa:NeverWaitEpsilon}, the Never-Wait algorithm itself cannot be better than 2-competitive in general.
	One way to improve the Never-Wait algorithm could be to better forecast what happens on the later stages of the PFFB. Observe that, despite the online situation, we can forecast job arrivals on later stages once the jobs become available at the first stage. Indeed, at any time step $t$, the full schedule (and thus arrival forecast) at each stage can be computed for all jobs which become available before $t$. Thus, for the later stages in the PFFB, the Never-Wait algorithm might be improved by using this additional knowledge of future job arrivals. Note that in the practical study mentioned above, such an improvement has been successfully attempted for a specific problem instance (see \cite{ackermannGOR}).
	
	In addition to closing these gaps, for the total completion time and total flow time objectives, it might be possible to achieve better competitive ratios for special cases where a certain minimum number of machines per stage is guaranteed. Observe that the lower bound constructions in Section~\ref{Sec:LowerBoundCmp} (Theorems~\ref{Thm:sumCjLB} and~\ref{Th:LBFSum}) only use a single machine. It might be the case that the same lower bounds do no longer hold if each stage contains several parallel machines. Possibly, better competitive ratios dependent on $m\coloneqq\min_{i\in[s]}m_i$ could be established. \citet{cao2011sumwjCj} provide a result of this kind for parallel machines (only one stage) with unbounded batch capacity. For the makespan and maximum flow time objectives, however, the lower bounds by \citet{zhang2003optimal} and \citet{Jiao2014maxflowtime} involve arbitrarily many parallel machines. Hence, for these objectives, it is not possible to achieve better competitive ratios for high values of $m$.
	
	Another open question concerns different objective functions. We have shown that in a schedule computed by the Never-Wait algorithm the finishing time of a job $J_j$ is at most twice the least possible finishing time of $J_j$ in any permutation schedule ordered by release dates. Unfortunately, for most traditional scheduling objectives beyond those studied in this paper, permutation schedules ordered by release dates are not, in general, optimal (see, e.g., \citet[Example~4]{PFBPaper1}).
	Still, the Never-Wait strategy may help with these objective functions, if jobs are prioritized differently. For example, instead of scheduling jobs in order of their release dates, each time a machine is started in the Never-Wait algorithm, one may instead pick the available jobs with the largest weight, if the objective involves job weights.
	It is, at the moment, unclear whether such an algorithm may be competitive and, if yes, what its competitiveness bound would be.
	
	On the other hand, in the online scenario, it may be valid to restrict the study of other objectives to the case where jobs are ordered by their release dates on all stages.
	In other words, instead of searching for an optimal schedule amongst all possible schedules, we search for an optimal schedule amongst all permutation schedules with jobs ordered by release dates.
	Especially in the pharmacological application we consider, such a first-in-first-out approach may be mandated due to ethical and fairness considerations.
	For this type of restricted problem, the Never-Wait algorithm may well prove to be competitive for many objectives beyond the ones considered in this paper, as long as these objectives are regular. For example, using the same arguments as before, the Never-Wait algorithm is 2-competitive w.r.t. the weighted total completion time / flow time. Indeed, as for the non-weighted versions, the factor 2 from Theorem \ref{NeverWaitBound} can be moved in front of the sum objectives to immediately see 2-competitiveness.

	\bibliographystyle{spbasic}      
	\bibliography{Online_PFFB_literature}   
	
	%
	%
	
	\section*{Appendix A: Proof of Theorem \ref{Thm:PermI}}
	We now show how to prove Theorem \ref{Thm:PermI}. First, analogous to Lemma~2 of \citet{PFBPaper1}, we need the following lemma.
	\begin{lemma}
		\label{lem:releaseDatePermu}
		Let $\varsigma$ be a feasible schedule for a PFFB and let $\pi$ be some earliest release date ordering of the jobs.
		Then there exists a feasible permutation schedule $\hat{\varsigma}$ in which the jobs are ordered by $\pi$ and the multi-set of job completion times in $\hat{\varsigma}$ is the same as in $\varsigma$.
	\end{lemma}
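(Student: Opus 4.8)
The plan is to prove the statement by induction on the number of ``inversions'' of $\varsigma$ relative to $\pi$, i.e., the number of pairs of jobs that appear in a different relative order at some stage than they do in $\pi$. More precisely, I would measure, for each stage $S_i$ separately, how far the order of start (equivalently, completion) times at $S_i$ deviates from the order dictated by $\pi$, and take the sum over all stages. If this total is zero, then $\varsigma$ is already a permutation schedule with the jobs ordered by $\pi$ and there is nothing to do. Otherwise, the goal is a local exchange argument: pick a stage $S_i$ and two jobs that are ``adjacent'' in the schedule at $S_i$ but out of $\pi$-order, swap their roles at $S_i$, and argue that (a) the resulting schedule is still feasible, (b) the multi-set of completion times at every stage — in particular at the last stage $S_s$ — is unchanged, and (c) the deviation measure strictly decreases, so the induction terminates.

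First I would make the exchange precise. Suppose at stage $S_i$ there are two jobs $J_a,J_b$ with $\pi^{-1}(a)<\pi^{-1}(b)$ (so $J_a$ should come first) but in $\varsigma$ the batch containing $J_b$ at $S_i$ starts no later than the batch containing $J_a$ at $S_i$, and these two batches are consecutive among the batches that $J_a,J_b$ could belong to. I would swap the stage-$S_i$ ``slots'' of $J_a$ and $J_b$: $J_a$ now occupies the earlier batch/machine slot that $J_b$ had, and $J_b$ the later one. To keep things feasible I also need to propagate this: the key point is that because processing times depend only on the stage (proportionate!) and batches have the same processing time regardless of size, swapping two jobs between two batches at one stage does not change any batch's start or finish time — it only changes which job is in which batch. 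Feasibility at $S_i$ requires that both jobs are available in time for their new batches; since $J_a$ now starts its earlier batch at $S_i$, I must check $c_{(i-1)a}\le$ (start of that batch), which follows because $r_a\le r_b$ and $\pi$ is an earliest-release-date order, combined with the induction hypothesis that upstream stages have already been put in $\pi$-order (so I would actually sweep stages left to right, $i=1,2,\dots,s$, fixing the order one stage at a time). Downstream, $J_a$ now finishes $S_i$ earlier than before and $J_b$ later; I need the batches at $S_{i+1},\dots,S_s$ to still be startable. Here I would invoke the same trick as in \citet{PFBPaper1}: since the set of completion times at $S_i$ is unchanged, I can re-assign jobs to the existing batches at $S_{i+1}$ greedily in $\pi$-order without changing any batch's timing, so every downstream batch keeps exactly the same start and finish times and hence the multi-set of completion times at each stage — and the multi-set of final completion times $\{C_j\}$ — is preserved.

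The main obstacle, and the part that needs the most care, is exactly this downstream re-assignment: after swapping $J_a$ and $J_b$ at $S_i$ one must argue that there is a feasible way to route all jobs through stages $S_{i+1},\dots,S_s$ using the identical collection of batch time-slots. The clean way is to show that feasibility at a stage depends only on the sorted vector of arrival times into that stage (not on job identities), because all jobs are interchangeable given proportionate processing times; then re-sorting and greedily filling batches earliest-first reproduces a feasible schedule with the same batch timings. I would state this as a small sublemma. Once that is in hand, the exchange strictly reduces the deviation measure at stage $S_i$ without increasing it at any other stage, so after finitely many exchanges we reach a schedule $\hat\varsigma$ that is a permutation schedule ordered by $\pi$ with the same multi-set of completion times as $\varsigma$, which is the claim. (Termination is immediate since the measure is a nonnegative integer that strictly decreases; alternatively one can phrase the whole argument as a direct induction on that integer.)
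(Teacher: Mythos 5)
Your proposal is sound in spirit, but it takes a more roundabout route than the paper, and the step you defer to a ``small sublemma'' is in fact the entire proof. The paper argues (following Lemma~2 of \citet{PFBPaper1}, adapted only by numbering the batches of each stage in start-time order, so that having several machines per stage is irrelevant) via a single global reassignment: every batch keeps its machine, start time and size, and the resulting slots at each stage are refilled with the jobs in the order $\pi$; feasibility is verified stage by stage with exactly the counting argument underlying your sublemma (if the $j$-th slot at stage $S_i$ starts at time $t$, then in $\varsigma$ at least $j$ jobs were available for $S_i$ by time $t$, so the $j$ jobs of smallest release date, respectively smallest stage-$(i-1)$ completion time in the already rebuilt schedule, are available by time $t$). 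Once you have that sublemma --- sorted arrival times fit into sorted slot start times whenever any feasible assignment exists --- you can apply it directly at $S_1$ with the release dates and then inductively at $S_2,\dots,S_s$, producing $\hat\varsigma$ in one pass; the inversion count, the adjacent-swap machinery and the termination argument then add nothing. One genuine caution about your version as written: in the middle of your sweep the completion order at stage $S_i$ is not yet the $\pi$-order, so the downstream repair must refill the slots at $S_{i+1}$ in order of arrival into $S_{i+1}$ (i.e., stage-$S_i$ completion order), not ``in $\pi$-order'' as you state --- a $\pi$-early job that still completes $S_i$ late could not occupy an early slot at $S_{i+1}$. With that correction (which is what ``earliest-first'' in your sublemma actually provides) your argument goes through, and preservation of the multi-set of completion times is immediate since no batch timing ever changes.
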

	\begin{proof}
		The proof is completely analogous to the proof of Lemma 2 from \cite{PFBPaper1}. To see this, one only needs to note that nowhere in the proof from \cite{PFBPaper1} it is actually needed that all batches $B_\ell^{(i)}$ at stage $S_i$ are processed on the same machine. Numbering the batches at stage $S_i$ in any start time order, the construction of the new schedule $\hat{\varsigma}$ as well as the proof of its feasibility work exactly as in the proof from \cite{PFBPaper1}. \qed
	\end{proof}

	Now we are ready to proof the main theorem.
	
	{\renewcommand*{\proofname}{Proof of Theorem \ref{Thm:PermI}}
	\begin{proof}
		Let $\varsigma$ be an optimal PFFB schedule with respect to one of our four objective functions $C_{\max}$, $\sum C_j$, $F_{\max}$, and $\sum F_j$. Let $\pi$ be an earliest release date ordering. Using Lemma \ref{lem:releaseDatePermu}, construct a new permutation schedule $\hat{\varsigma}$, with jobs ordered by $\pi$ on all stages and with the same multi-set of job completion times.
	
		For objective functions $C_{\max}$ and $\sum C_j$, clearly the new schedule $\hat{\varsigma}$ is optimal, since $\varsigma$ is optimal and $\hat{\varsigma}$ has the same multi-set of job completion times. Moreover, since $\sum F_j = \sum C_j - c$, where $c$ is a constant given by $c = \sum_{i=1}^n r_i$, the same argument holds for objective function $\sum F_j$.
		
		Finally, for objective function $F_{\max}$, suppose that jobs are indexed according to the earliest release date ordering $\pi$, i.e., $r_1\leq r_2\leq\dots\leq r_n$ and $C_1(\hat{\varsigma})\leq C_2(\hat{\varsigma})\leq\dots\leq C_n(\hat{\varsigma})$. Let $j$ be the index of the job with maximum flow time in $\hat{\varsigma}$, i.e., $F_{\max}(\hat{\varsigma})=C_j(\hat{\varsigma})-r_j$. Since $\varsigma$ has the same multi-set of job completion times as $\hat{\varsigma}$, there exist at most $j-1$ jobs with a completion time strictly less than $C_j(\hat{\varsigma})$ in the original schedule $\varsigma$. On the other hand, all the $j$ jobs $J_1,J_2,\dots,J_j$ have a release date of at most $r_j$. Hence, by pigeon-hole principle, there must exist a job $J_{j'}$ with $C_{j'}(\varsigma)\geq C_j(\hat{\varsigma})$ and $r_{j'}\leq r_j$. This implies \[F_{\max}(\varsigma)\geq C_{j'}(\varsigma) - r_{j'} \geq C_j(\hat{\varsigma})-r_j = F_{\max}(\hat{\varsigma}).\tag*{\qed}\]
	\end{proof} }
	
\end{document}